    \DeclarePairedDelimiter\ceil{\lceil}{\rceil}
\newcommand\reallywidehat[1]{%
\savestack{\tmpbox}{\stretchto{%
  \scaleto{%
    \scalerel*[\widthof{\ensuremath{#1}}]{\kern-.6pt\bigwedge\kern-.6pt}%
    {\rule[-\textheight/2]{1ex}{\textheight}}
  }{\textheight}%
}{0.5ex}}%
\stackon[1pt]{#1}{\tmpbox}%
}
\colorlet{LightRubineRed}{RubineRed!70!}
\colorlet{Mycolor1}{green!10!orange!90!}
\definecolor{Mycolor2}{HTML}{0000FF} 
\definecolor{Mycolor3}{HTML}{000000} 
\def\checkmark{\tikz\fill[scale=0.4](0,.35) -- (.25,0) -- (1,.7) -- (.25,.15) -- cycle;}
\algnewcommand\REQUIRED{\item[\textbf{Required:}]}%
\algnewcommand\INPUT{\item[\textbf{Input:}]}%
\algnewcommand\OUTPUT{\item[\textbf{Output:}]}%
\def\bg{{\bf g}}
\def\bc{{\bf c}}
\def\bo{{\bf o}}
\def\bm{{\bf m}}
\def\bs{{\bf s}}
\def\bx{{\bf x}}
\def\rs{{\mathsf{s}}}
\def\ro{{\mathsf{o}}}
\def\rx{{\mathsf{x}}}
\def\rc{{\mathsf{c}}}
\def\rm{{\mathsf{m}}}
\def\rtr{{\mathsf{tr}}}
        \newtheorem{theorem}{Theorem}
        \newtheorem{condition}[theorem]{Condition}
        \newtheorem{definition}[theorem]{Definition}
        \newtheorem{lemma}[theorem]{Lemma}
\title{\LARGE Task-Oriented Data Compression for Multi-Agent Communications Over Bit-Budgeted Channels \\ 
\thanks{The authors are with the Centre for Security Reliability and Trust, University of Luxembourg, Luxembourg. Emails: \{arsham.mostaani, thang.vu, symeon.chatzinotas, bjorn.ottersten\}@uni.lu}
\thanks{This work is supported by the ERC AGNOSTIC project, ref. H2020/ERC2020POC/957570/DREAM.}}
\author{\IEEEauthorblockN{Arsham Mostaani, Thang X. Vu, Symeon Chatzinotas, and Bj\"orn Ottersten}
}
\begin{document}
\maketitle

\vspace{-10mm}
\begin{abstract}
\textcolor{Mycolor3}{Various applications for inter-machine communications are on the rise. Whether it is for autonomous driving vehicles or the internet of everything, machines are more connected than ever to improve their performance in fulfilling a given task. While in traditional communications the goal has often been to reconstruct the underlying message, under the emerging task-oriented paradigm, the goal of communication is to enable the receiving end to make more informed decisions or more precise estimates/computations. Motivated by these recent developments, in this paper, we perform an indirect design of the communications in a multi-agent system (MAS) in which agents cooperate to maximize the averaged sum of discounted one-stage rewards of a collaborative task. Due to the bit-budgeted communications between the agents, each agent should efficiently represent its local observation and communicate an abstracted version of the observations to improve the collaborative task performance. We first show that this problem can be approximated as a form of data-quantization problem which we call task-oriented data compression (TODC). We then introduce the state-aggregation for information compression algorithm (SAIC) to solve the formulated TODC problem. It is shown that SAIC is able to achieve near-optimal performance in terms of the achieved sum of discounted rewards. The proposed algorithm is applied to a geometric consensus problem and its performance is compared with several benchmarks. Numerical experiments confirm the promise of this indirect design approach for task-oriented multi-agent communications. }
\end{abstract}

\begin{IEEEkeywords}
 Task-oriented communications, semantic communications, data quantization, machine learning for communications, communications for machine learning.
\end{IEEEkeywords}

\vspace{-3mm}
\section{Introduction}

The design of traditional communication systems has often been carried out according to task-agnostic principles. Information and coding theories drive the major analytical and design techniques, where the former sets the upper bounds on the system capacity, and the latter focuses on techniques for approaching the bounds with infinitesimal error probabilities. Accordingly, digital communications have made astonishing strides in terms of performance, enabling robust information transmission even under adverse channel conditions. However, in the era of cyber-physical systems, the effectiveness of communications is not solely dictated by the traditional performance indicators (e.g., bit rate, latency, jitter, fairness etc.), but most importantly by the efficient completion of the task in hand, e.g., remotely controlling a robot, automating a production line or collaboratively sensing/communicating through a drone swarm.

 Machine to machine communications occur since the received signals can help the receiving end to make more informed decisions or more precise estimates/computations. In this context, the reliability of the communications is not essential beyond serving the specific needs of the control/estimation/computational task that the receiving end machine is trying to accomplish. This calls for a fresh look into the design of communication systems that have been engineered with reliability as one of their ultimate goals. The emerging literature on semantic communications as well as goal/task-oriented communications is trying to take the first steps towards the above-mentioned goal, i.e., incorporating the semantics as well as the goal/usefulness of the message exchange into the design of communication systems \cite{mostaani2021task,gunduz2022beyond, strinati20216g}.
By jointly analyzing the features of the collaborative task and the constraints on the underlying communication infrastructure, the communication strategies can be adapted or tailored such that they will be specifically effective for the task. 

\textcolor{Mycolor3}{ This paper attempts to take the first steps towards designing an \emph{indirect} task-effective data compression theory. While the data compression algorithm proposed by this paper is designed in an \emph{indirect}\footnote{ \textcolor{Mycolor3}{ By using the word \emph{indirect} here we are not referring to the concept of indirect access to the source of information \cite{1056251} - this usage of the word falls in the nomenclature of source coding and information theory. In fact, we are referring to the concept being introduced by the control theory nomenclature in which an indirect design is generic enough to be used for an unmodelled system dynamics and not a certain dynamic \cite{ioannou1988theory}. Thus the schemes - such as SAIC - which enjoy from an indirect design can be applied to all/a wider range of tasks. In contrast to indirect schemes, "the direct schemes aim at guaranteeing or improving the performance of the cyber-physical system at a particular task by designing a
task-tailored communication strategy" \cite{mostaani2021task}. }} fashion i.e., not for a specific task, we demonstrate its applicability in a specific task: a geometric consensus problem under finite observability \cite{barel2017come}. As attested by \cite{xie2022task}, "a unified framework to support various tasks is still missing in multi-user semantic communications.". Unlike earlier task-oriented quantization techniques that tailor a quantization scheme to certain application \cite{shlezinger2020task}, this work proposes an \emph{indirect} design for its task-oriented quantization scheme - SAIC. The \emph{indirect} design is carried out in a fashion that the it never benefits from any explicit domain knowledge about any specific task e.g., geometric consensus problems. Accordingly, the \emph{indirect} design of the algorithms allows them to be applied beyond the geometric consensus problems and to a much wider range of tasks.} The framework can be applied where a major communication bottleneck is in place between multiple cooperative decision makers. This bottleneck can occur due to a multitude of reasons (i) the energy lifetime of the communicating agents e.g., in the case of UAV/LEO satellite communications, that forces agents to communicate with low-energy high-range communication protocols \cite{palattella2018enabling, chaari2019heterogeneous} (ii) the limitations imposed by the environment on the communication channel e.g., in space/underwater missions or (iii) limited communication resources of the network through which agents communicate. For more on the applications of TODC see \cite{mostaani2021task, azari2022evolution}.

\subsection{Task-Oriented Data Compression}
In particular, we consider a cooperative scenario where our goal is to optimize the expected return of a multi-agent system that is run on top of an underlying Markov decision process. The system's return is an unknown function of joint observations and control actions of all agents. The system's expected return can be controlled or optimized by selecting the proper joint controls actions at all agents. The partial observability of each agent together with their limitation to merely select local actions necessitates the presence of inter-agent communications to improve the coordination across the multi-agent system. We assume a full mesh communication network between all agents and that all the communication channels in the network are bit-budgeted but error-free. \textcolor{Mycolor3}{ That is, the communication channels are all error-free fixed-rate bit pipes \cite{nair2003exponential} and not variable rate bit-pipes \cite{nair2004stabilizability} - the fixed rate of communications is constant across all inter-agent communication channels.} Under these circumstances, rate-limited communication channels between agents drive the need for task-oriented data compression i.e., the usefulness of each message exchange should be incorporated into the design of the data compression strategy. The communicated messages between agents are useful only when they positively affect the decision-making of the receiving agents towards improving the system's expected return.

\textcolor{Mycolor3}{
The problem we address would be a classic multi-agent Markov decision process (MAMDP) \cite{lauer2000distributedQ} if, each agent's communication message could include all the information inside the agent's observations.  We assume, however, that the communication message of each agent is sent over a bit-budgeted communication channel i.e., per each channel use each agent will be able to reliably communicate a bit sequence with a length less than the entropy rate of the observation process. With this information constraint in place, it becomes imperative to carry out the communications at each agent such that they lead to the optimal expected return performance of the MAS. Each agent has to jointly select its control and quantized message at each time step with the aim of optimizing the expected return.}

\textcolor{Mycolor3}{
Due to the bit-budgeted communications between the agents, it is necessary for agents to compactly represent their observations in communication messages. As we ultimately measure the performance of the MAS in terms of the expected return, the loss of information caused by the compact representation of the agents' observations needs to be managed in such a way that it minimally affects the obtained return \cite{kostina2019rate, tung2021effective}. As such, in this form of compression scheme which we call task-oriented data compression, \textit{ the goal of abstraction is different from conventional compression schemes} whose ultimate aim is to reduce the distortion between the original signal and the decoded/reconstructed signal \cite{arimoto1972algorithm} - see \cite{shlezinger2020task,shlezinger2021deep}, where a similar task-based notion is introduced and a comparison of it with our work in Table \ref{table: related-works}. }

\vspace{-3mm}
\subsection{Literature Review}
\vspace{-2mm}
\textcolor{Mycolor3}{As we study the joint communication and control design of a MAS, the topic of this paper falls under the general category of multi-agent communications \cite{pynadath2002communicative}.} \textcolor{Mycolor3}{ In contrast to many other cooperative multi-agent systems \cite{lee2020optimization}, the  full state and action information are not available here to each agent.  Accordingly, agents are required to carry out communication to overcome these barriers \cite{pynadath2002communicative}.} Earlier works used to address the coordination of multiple agents through a noise-free communication channel, where the agents follow an engineered communication strategy \cite{ZhangCoordinating,fischer2004hierarchical,kasai2008learning,wu2011online,Amiri2018CEASE}. Later the impact of stochastic delays in multi-agent communication was considered on the multi-agent coordination \cite{wu2011online}, while \cite{Amiri2018CEASE} considers event-triggered local communications. Deep reinforcement learning with communication of the gradients of the agents' objective function was proposed in \cite{FoersterLearning} to learn the communication among multiple agents. In contrast to the above-mentioned works, the presence of noise in the inter-agent communication channel was first studied by \cite{mostaani2019Learning} where exact reinforcement learning was used to design the inter-agent communications. \textcolor{Mycolor3}{ Later, the authors of \cite{tung2021effective} proposed a deep reinforcement learning approach to address a similar problem. Papers \cite{mostaani2019Learning, mostaani2020state, tung2021effective, shlezinger2021deep, shlezinger2020task} and \cite{kim2019schedule} have contributed to the rapidly emerging literature on task-oriented communications \cite{mostaani2021task}.} Noteworthy are also some novel metrics that are introduced in \cite{lowe2019pitfalls} to measure the \textit{positive signaling} and \textit{positive listening} amongst agents which learn how to communicate \cite{mostaani2019Learning,kim2019schedule,FoersterLearning}.

The current work can also be seen as designing a state aggregation algorithm. In this paper, state aggregation enables each agent to compactly represent its observations through communication messages while maintaining their performance in the collaborative task. Classical state aggregation algorithms, however, have been used to reduce the complexity of the dynamic programming problems over MDPs \cite{Bertsekas1989AdaptiveAg,bertsekas2018feature, abel2016near, rubino1989weak} as well as Partially Observable MDPs \cite{bertsekas2019biased}. One similar work is \cite{zou2018decision}, which studies a task-based quantization problem. In contrast to our work, the assumption there is that the parameter to be quantized is only measurable and cannot be controlled. In our problem, agents' observations stem from a generative process with memory, an MDP. Similarly, in \cite{mao2019learning}, the authors have introduced a gated mechanism so that reinforcement learning-aided agents reduce the rate of their communication by removing messages which are not beneficial for the team. However, their proposed approach mostly relies on numerical experiments. In contrast, this paper relies on analytical studies to design a multi-agent communication policy which efficiently coordinates agents over a bit-budgeted channel - \textcolor{Mycolor3}{ the benefits of our analytical approach are briefly explained in the contributions section \ref{subsect: contributions}. }
State aggregation algorithms are often developed for single-agent scenarios and are used to reduce the complexity of MDPs. \textcolor{Mycolor3}{To the best of our knowledge, we are the first to design a TODC algorithm using state-aggregation schemes. In particular, we use state-aggregation to design a data compression scheme to compactly represent the observation process of each agent in a multi-agent system.}

Conventionally, the communication system design is disjoint from the distributed decision-making design \cite{ZhangCoordinating,fischer2004hierarchical,kasai2008learning,wu2011online,FoersterLearning,sukhbaatar2016learning}. The current work can also be interpreted as a demonstration of the potential of the joint design of the data compression/quantization and control policies. Determining the existence of a quantizer operating at a certain bit-budget to achieve a given figure of expected return is known to be an "intriguing open problem" \cite{kostina2019rate} - even for single agent scenarios. Here we set a non-closed form upper bound on the expected-return performance of the multi-agent system given a quantization data rate/ the finite size of the discrete alphabet of the quantizer. We show how this joint quantization and control design problem is connected to minimizing an absolute error distortion measure via Theorem 1. A similar interpretation of the TODC problem can also be seen in \cite{stavrou2022rate}. While relevant, their setup is different from our work as they consider two distortion criteria for the rate-distortion problem.

\textcolor{Mycolor3}{
We will show in section \ref{problem statement subsection}, that, in fact, the decentralized problem we target can be translated as the joint constrained design of the control policies as well as the observation function of a Dec-POMDP to maximize the expected return. While in classic Dec-POMDP problems the observation function is considered to be a fixed function \cite{oliehoek2008optimal}, by a constrained design of the observation function, our problem setting offers more flexibility in designing a multi-agent system. The design of the observation function helps to filter the non-useful observation information of each agent while meeting the problem's constraint i.e., the communication bit-budget. The mathematical framework being used here is neither a classic MDP as we have the issue of partial observability, nor is a partially observable MDP (POMDP) \cite{monahan1982state} as the action vector is not jointly selected at a single entity.} \textcolor{Mycolor3}{ Our problem setting is differentiated  from Dec-POMDPs due to the fact that in Dec-POMDPs the partial observability is accepted as is, where as in our problem setting we design the lens through which the agents acquire a partial observation/perception of the environment. }

\textcolor{Mycolor3}{
Nevertheless, a similar class of problems - often referred to as task-oriented, goal-oriented or efficient communication approaches, has recently received significant attention from the communication society, see e.g., the extensive surveys on similar problems in \cite{mostaani2021task,gunduz2022beyond, strinati20216g}. Table \ref{table: related-works} positions the current work against some of the recent research that is closely related. To date, there is no work in the literature that we are aware of, which provides an analytical approach to the design of task-based communications for the coordination of multiple cooperative agents.}


\vspace{-3mm}
\subsection{Contributions} \label{subsect: contributions}
\vspace{-1mm}
The contributions of this paper are as follows:

    \textcolor{Mycolor3}{Firstly, we develop a general cooperative multi-agent framework in which agents interact over an underlying MDP environment. Unlike the existing works which assume perfect communication links \cite{zilber2001communication,sukhbaatar2016learning,FoersterLearning,kim2019schedule}, we assume the practical bit-budgeted communications between the agents. We formulate a multi-agent cooperative problem where agents interact over an underlying MDP and can communicate over a bit-budgeted channel. Our goal is to derive the optimal control and communication strategies to maximize the expected return. We will show in section \ref{problem statement subsection}, that an underlying difference in our setting from the Dec-POMDP is that here we carry out a constrained design of each agent's perception function - which is also referred to as the observation function in the literature of the Dec-POMDP \cite{oliehoek2007dec}. The constraints of this design are dictated by the bit-budget of the inter-agent communication channels.}  
    
    \textcolor{Mycolor3}{Secondly, Theorem 1, in section \ref{sec: SAIC}, derives the interconnection between the joint control and communication/quantization problem and a generalized version of the data quantization problem: TODC  problem. In fact, the TODC problem distils all the relevant features of the control task and takes them into account in a novel non-conventional communication design problem. This is the underlying reason behind the effectiveness of the designed communications and is one the contributions in this work differentiating it from existing works in \cite{shlezinger2021deep,shlezinger2020task,larrousse2018coordination,kostina2019rate,FoersterLearning, lowe2019pitfalls,tung2021effective, mostaani2019Learning}.}
    \textcolor{Mycolor3}{Our analytical studies show that how the value function - the function that estimates the expected return of the system given the current observation - can be considered as a proper indirect measure of the usefulness of the data to be compressed. Thus, Theorem 1, shows how the usefulness of the (observation) data can be incorporated into the design of the TODC policy.}
    

    \textcolor{Mycolor3}{Thirdly, we  propose a novel algorithm - SAIC - as a multi-agent state-aggregation algorithm which designs indirect task-effective communication strategies via solving (an approximated version of) the TODC problem. As a result, the performance of SAIC in terms of the system's expected return is on par with the jointly optimal strategies. To the best of our knowledge, this is the first use of state-aggregation algorithms for data-compression applications (in multi-agent systems) according to which our work differs from the classic state-aggregation literature \cite{Bertsekas1989AdaptiveAg,bertsekas2018feature, abel2016near, rubino1989weak} as well as the recent advancements in multi-agent communication literature \cite{FoersterLearning, lowe2019pitfalls}.}
    
    \textcolor{Mycolor3}{Moreover, we extend the existing results in the single-agent state-aggregation literature \cite{abel2016near} on the gap between the optimal control and the state-aggregated control schemes, where the former has access to the true state of the environment and the latter has access to an aggregated state of the environment - to reduce the computational complexity. We quantify the same gap for a multi-agent system - Theorem \ref{theorem: error bound}. In our work, however, the gap is due to the bit-budget that is introduced on the inter-agent communication channels, whereas in classic state-aggregation literature the gap was a consequence of the constraints on the computational complexity. In addition to that, our theoretical results show that if our proposed method, SAIC, is applied the expected return of the multi-agent communication system - with the bit-budget in place - can stay in close proximity to the optimal expected return that is obtained under jointly optimal strategies.}
    
\textcolor{Mycolor3}{  
\begin{table*}
\caption{\textcolor{Mycolor3}{Comparison between our work and the related prior art}}
\centering
 \begin{tabular}{||c c c c c c c ||} 
 \hline
 Paper & \begin{tabular}[c]{@{}c@{}}Information\\  source\\ with memory\end{tabular} & \begin{tabular}[c]{@{}c@{}}Joint coms \\ and control\end{tabular} & Distributed & \begin{tabular}[c]{@{}c@{}}Source/Channel\\  coding\\ Quantization\end{tabular} & \begin{tabular}[c]{@{}c@{}}Implicit/Explicit\\ coms\end{tabular} & \begin{tabular}[c]{@{}c@{}}Analytical/\\ Data-driven \end{tabular}  \\ [0.5ex]  \hline\hline
 \small{\cite{shlezinger2021deep,shlezinger2020task}} & \small{$\times$} &  \small{$\times$} & \small{$\times$} & Quantization & N/A & Data-driven \\ 
 \hline
  \small{\cite{larrousse2018coordination}} & \small{$\times$} &  \small{$\checkmark$} & \small{$\checkmark$} & N/A & Implicit & Analytical \\ 
 \hline
   \small{\cite{kostina2019rate}} & \small{$\checkmark$ (Linear)} &  \small{$\checkmark$} & \small{$\times$} & Quantization & Explicit & Analytical \\ 
 \hline
    \small{\cite{FoersterLearning, lowe2019pitfalls}} & \small{N/A} &  \small{$\checkmark$} & \small{$\checkmark$} & N/A & Explicit & Data-driven \\ 
 \hline
 \small{\cite{tung2021effective, mostaani2019Learning} } & \small{\checkmark (Markov)} &  \small{$\checkmark$} & \small{$\checkmark$} & \small{Channel Coding} & Explicit & Data-driven \\ 
 \hline
  \small{Our work} & \small{\checkmark (Markov)} &  \small{$\checkmark$} & \small{$\checkmark$} & \small{Quantization} & Explicit & Analytical \\ 
 \hline
\end{tabular}
\label{table: related-works}
\vspace{-3mm}
\end{table*}}
    \textcolor{Mycolor3}{Last but not least, numerical experiments are carried out on a geometric consensus problem to compare the performance of SAIC with several other benchmark schemes in terms of the optimality of the expected return, for a multi-agent scenario \footnote{ Due to the complexity related issues explained in section \ref{Numerical results - Section} \& \ref{sec: conclusion}, the numerical results are limited to two-agent and three-agent scenarios.}.
    It is shown that when communication bit-budgets are in place, SAIC is of significant advantage over the benchmarks. In particular, we observe a very tight gap between the performance of SAIC and the optimal control strategy where only the latter runs over perfect communication channels and the former runs over bit-budgeted channels.}

\subsection{Organization} Section II describes the system model for a cooperative multi-agent task with rate-constrained inter-agent communications. Section III Proposes a scheme for the joint design of communication and control policies that takes the value of information into account to perform data compression. \textcolor{Mycolor3}{We also provide analytical results on how distant the result of this algorithm can be from the optimal centralized solution. The numerical results and discussions are provided in section IV. Finally, section V concludes the paper.}

\subsection{Notation}
For the reader's convenience, a summary of the notation that we follow in this paper is given in Table \ref{table-notation}. Bold font is used for matrices or scalars which are random and their realizations follows simple font.

\vspace{-2mm}
 \section{System Model} \label{System model - Section}
 \vspace{-2mm}


   In the multi-agent system, comprised of $n$ agents, at any time step $t$ each agent $i \in \mathcal{N}$ makes a local observation $\ro_i(t) \in \Omega$ on environment while the true state of the environment
  \begin{align} \label{eq: state vs observation}
      \rs(t) = \langle \ro_1(t),..., \ro_n(t) \rangle
  \end{align}
  is a member of $ \mathcal{S} = \Omega^n$. The alphabets $\Omega $ and $\mathcal{S}$ define observation space and state space, respectively. The particular observation structure of agents' observations, is referred to as collective observations in the literature \cite{pynadath2002communicative}. Under collective observability, individual observation of an agent provides it with partial information about the current state of the environment, however, having knowledge of the collective observations acquired by all of the agents is sufficient to realize the true state of environment - eq. (\ref{eq: state vs observation}). The columns of the state vector are orthogonal to each other. Note that even in the case of collective observability, for agent $i$ to be able to observe the true state of environment at all times, it needs to have access to the observations of the other agents $j \in \mathcal{N} - \{i\} \triangleq \mathcal{N}_{-i} $ through communications at all times.

  The true state of the environment $\rs(t)$ is controlled by the joint actions $\rm(t) = \langle \rm_1(t), ..., \rm_n(t) \rangle \in \mathcal{M}^n$ of the agents, where each agent $i$ can only choose its local action $\rm_i(t) \in \mathcal{M}$. The environment runs on discrete time steps $t = 1, 2, ..., M$, where at each time step, each agent $i$ selects its domain level action $\rm_i(t)$ upon having an observation $\ro_i(t)$ of the environment. \textcolor{Mycolor3}{ Dynamics of the environment are governed by a conditional probability mass function (CMF) 
\begin{small}
 \begin{align} \label{transition probability - equation}
& T\big(\rs(t+1) | \rs(t), \rm(t) \big) = p\big(\rs(t+1) |  \rs(t), \rm(t)\big)
\end{align}
 \end{small}
 which is unknown to the agents. $T(\cdot): \Omega^{2n} \times \mathcal{M}^n \rightarrow [0,1]$ determines the future state of the environment $\rs(t+1)$ given its current state $\rs(t)$ and the joint actions $\rm(t)$.} We recall that each agent $i$'s domain level action $\rm_i(t)$ can, for instance, be in the form of a movement or acceleration in a particular direction or any other type of action depending on the domain of the cooperative task.

A deterministic reward function $r(\cdot): \Omega^n \times \mathcal{M}^n \rightarrow \mathbb{R}$ indicates the reward of all agents at time step $t$, where the arguments of the reward function are the joint observations $\rs(t)$ and the domain-level joint actions $\rm(t)$ of all agents. We assume that the underlying environment over which agents interact can be defined in terms of an MDP \footnote{As defined in the literature [10], the underlying MDP’ is the horizon-$T'$ MDP defined by a hypothetical single agent that takes joint actions
$\bm(t) \in \mathcal{M}^n $ and observes the nominal state $\bs(t) \triangleq \langle \bo_1(t),\dots, \bo_n(t)   \rangle $ that has the same transition model $T(\cdot)$ and reward
model $R(\cdot)$ as the environment experienced by our multi-agent system.} determined by the tuple $\big{\{} \Omega^n, \mathcal{M}^n, r(\cdot), \gamma, T(\cdot)  \big{\}}$, where $\Omega$ and $\mathcal{M}$ are discrete alphabets, $r(\cdot)$ is a function, $T(\cdot)$ is defined in (\ref{transition probability - equation}) and the scalar $\gamma \in [0,1]$ is the discount factor. The focus of this paper is on scenarios in which the agents are unaware of the state transition probability function $T(\cdot)$ and of the closed form of the function $r(\cdot)$. However we assume that, further to the literature of reinforcement learning \cite{Suttonintroduction}, a realization of the function $r\big(\rs(t),\rm(t)\big)$ will be accessible for all agents at some time steps. Since the tuple $\big{\{} \Omega^n, \mathcal{M}^n, r(\cdot), \gamma, T(\cdot)  \big{\}}$ is an MDP and the state process $\bs(t)$ is jointly observable by agents, the system model of this cooperative multi-agent setting, under perfect communications, is also referred to as a  multi-agent MDP (MAMDP or MMDP) in the literature of multi-agent decision making \cite{rizk2018decision, lauer2000distributedQ, boutilier1999multiagent}.

In what follows two problems regarding the above-mentioned setup is detailed i.e., centralized and decentralized control problems. The main intention of this paper is to address decentralized control which also incorporates inter-agent communications for a system of multiple agents. \textcolor{Mycolor3}{ The centralized control problem, however, is also formalized in subsection \ref{subsec: centralized control} as the optimal expected return obtained for the centralized problem can serve as a lower-bound/(upper-bound) for the decentralized scheme.} Moreover, the simpler nature and mathematical notations used for the centralized problem, allow the reader to have a smoother transition to the decentralized problem which is of a more complex nature.

\begin{table}[b]
\caption{Table of notations}
\centering
 \begin{tabular}{||c c ||} 
 \hline
 Symbol & Meaning \\ [0.5ex] 
 \hline\hline
 \small{$\bx(t)$} & \small{A generic random variable generated at time $t$}  \\ 
 \hline
 \small{$\rx(t)$} & \small{Realization of $\bx(t)$}  \\
 \hline
 \small{$\mathcal{X}$} & \small{Alphabet of \bx(t)}  \\
 \hline
 \small{$|\mathcal{X}|$} & \small{Cardinality of $\mathcal{X}$}  \\
 \hline
 \small{$p_{\bx}\big(\rx(t)\big)$} & \small{Shorthand for $\mathrm{Pr}\big(\bx(t) = \rx(t) \big)$}  \\  
 \hline
 \small{$H\big(\bx(t)\big)$} & \small{Information entropy of $\bx(t) $ (bits)}  \\  
 \hline
  \small{$\mathcal{X}_{-\bx}$} & \small{ $\mathcal{X} - \{\bx\}$}  \\ [1ex]
 \hline
   \small{$\mathbb{E}_{p(\rx)}\{\bx\}$} & \small{\makecell{Expectation of the random variable $X$ over the \\ probability distribution $p(\rx)$}}  \\ [1ex]
 \hline
    \small{$\delta(\cdot)$} & \small{Dirac delta function}  \\ [1ex]
 \hline
   \small{$\rtr(t) $} & \small{Realization of the system's trajectory at time $t$}  \\ [1ex]
 \hline
\end{tabular}
\label{table-notation}
\end{table}
\vspace{-3mm}
\subsection{Centralized Control} \label{subsec: centralized control}
\vspace{-1mm}
We consider a scenario in which a central controller has instant access to the observations $\ro_1(t), ..., \ro_n(t)$ of both agents through a free (with no cost on the objective function) and reliable communication channel.
From the central controller's point of view, the environment is the same as the underlying MDP that governs the system $\Big{\{} \Omega^n , \mathcal{M}^n, r(\cdot), \gamma, T(\cdot) \Big{\}}$. \textcolor{Mycolor3}{ The goal of the centralized controller is to maximize the expected sum of discounted rewards (\ref{centralized problem - general problem}).} The expectation is computed over the joint PMF of the whole system trajectory $\bs(1), \bm(1),  ..., \bs(M), \bm(M)$ from time $t=1$ to $t=M$, where this joint probability mass function (PMF) is generated if agents follow policy $\pi(\cdot)$, eq. (\ref{centralized policy - equation}), for their action selections at all times and the initial state $\bs(1) \in \mathcal{S}$ is randomly selected by the initial distribution $\bs(1) \sim \alpha_\bs $. \textcolor{Mycolor3}{ For the sake of having a more compact notation to refer to the system trajectory, hereafter, we represent the realization of a system trajectory at time $t$ by $\rtr(t)$ which corresponds to the tuple $\langle \ro_1(t), ..., \ro_n(t), \rm_1(t), ..., \rm_n(t) \rangle$ and the realization of the whole system trajectory by $\{\rtr (t)\}_{t=1}^{t=M}$.} Accordingly, the problem boils down to a single agent problem which can be denoted by
{\small\textcolor{Mycolor3}{
\begin{align}\label{centralized problem - general problem}
&  \!\!\underset{\pi(\cdot)}{\text{max}}
& &  \!\!\!\mathbb{E}_{p_\pi\big(\{\rtr (t)\}_{t=1}^{t=M}\big)}                  \!\Big{\{}
\!\!\sum_{t=1}^M \gamma^{t-1} r\big(\bs(t),\bm(t)\big)
                \!\Big{\}}  
\end{align}}}
where the policy $\pi$ can be expressed as a CMF
{\small
\begin{align} \label{centralized policy - equation}
    \pi \Big( \rm(t) \Big{|} \rs(t)  \Big)  =  p \Big(\rm(t) \Big{|} \rs(t) \Big),
\end{align}}
 \hspace{-2.5mm}
 and $p_{\pi}\big(\rs(t+1)|\rs(t)\big)$ is the probability of transitioning from $\rs(t)$ to $\rs(t+1)$ when the joint action policy $\pi(\cdot)$ is executed by the central controller. Similarly, $p_\pi\big(\{\rtr (t)\}_{t=1}^{t=M}\big)$ is the joint PMF of $\rtr(1), \rtr(2),...,\rtr(M)$ when the joint action policy $\pi(\cdot)$ is followed by the central controller.
 
 On one hand, problem (\ref{centralized problem - general problem}) can be solved using single-agent Q-learning \cite{Suttonintroduction} and the solution $\pi^{*} (\cdot)$ obtained by Q-learning is guaranteed to be the optimal control policy, given some non-restricting conditions \cite{jaakkola1994convergence}. On the other hand, the use-cases of the centralized approach are limited to the applications in which there is a permanent communication link with an unlimited bit-budget between the agents and the controller. Whereas these conditions are not met in many remote applications, where there is no communication infrastructure to connect the agents to the central controller.
 
 Given sufficient training time, and channels with the sufficient rate of communication between the agents and the central controller, the centralized algorithm provides us with a performance upper bound in maximizing the objective function (\ref{centralized problem - general problem}). \textcolor{Mycolor3}{ Perfect communication between the central controller and distributed agents, however, may not exist due to the resource limitations of the telecommunication/communication network. Thus, the aim of this paper is to introduce decentralized approaches which are run over practical bit-budgeted communication channels, yet show comparable performance levels. In the distributed scenario, the agents do not communicate with a central controller, but the bit-budgeted communications are performed for inter-agent message exchange. The centralized problem can be presented by an MDP and be solved efficiently by a single agent reinforcement learning algorithm. As explained in the section \ref{subsect: contributions}, the decentralized problem is a more complicated/general form of Dec-POMDP, where we know that a Dec-POMDP is already much more complex than an MDP to solve \cite{oliehoek2007dec} - to see further insights about the significance and the applications of the decentralized problem see e.g., \cite{mostaani2021task}. }

\vspace{-3mm}
\subsection{Problem Statement}\label{problem statement subsection}
\vspace{-3mm}
Here we consider a scenario in which the same objective function explained in Eq. (\ref{centralized problem - general problem}) needs to be maximized by the multi-agent system in a decentralized fashion, Fig. \ref{decentralized problem - figure}. Namely, agents with partial observability can only select their own actions. To prevail over the limitations imposed by the local observability, agents are allowed to have direct (explicit) communications, and not indirect (implicit) communications \cite{larrousse2018coordination,heylighen2016stigmergy}. \textcolor{Mycolor3}{However, the communication is done through a bit-budgeted but reliable channel. The bit-budget of the channel is $R$-bits per time step. Equivalently, each agent $i$ at every time step $t$ produces and transmits a single digit communication message $\bc_i(t) \in \mathcal{C}$} such that
 {\small\begin{equation} \label{eq: bit-budget constraint}
 \textcolor{Mycolor3}{log_2|\mathcal{C}| \leq R},
 \end{equation}}
  i.e., the size of the code-books $\mathcal{C}$ for all agents is the same and is less than $2^R$. The communication message $\bc_i(t)$ produced by agent $i$ is broadcast and received every agent $j \in \mathcal{N}_{-i}$.
 It should be noted that the design of the channel coding is beyond the scope of this paper and the main focus is on the compression of agents' observations. In particular we consider \textcolor{Mycolor3}{$R$} to be time-invariant and to follow:
{\small\textcolor{Mycolor3}{
 \begin{equation}\label{observation structure, scenario 2}
 \text{}
      \textcolor{Mycolor3}{R} < \text{min} \,\, \big{\{}H\big(\bo_1(t)\big), ..., H\big(\bo_n(t)\big) \big{\}}.
\end{equation}}}
 \vspace{-0.0cm}
 \begin{figure}[t]
  \centering
      \includegraphics[width=0.400\textwidth]{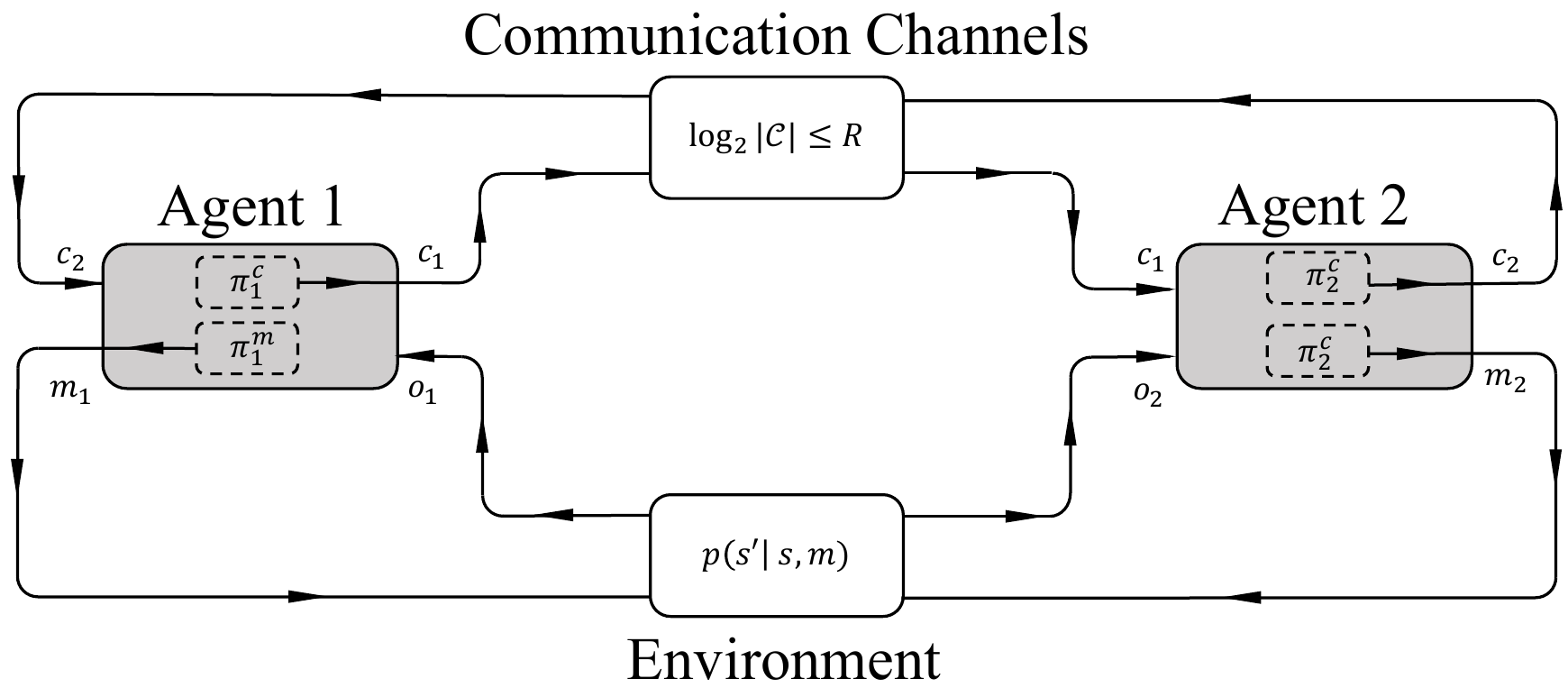}
      \vspace{-2mm}
  \caption{An illustration of the decentralized cooperative two-agent system with rate-limited inter-agent communications.}
  \label{decentralized problem - figure}
  \vspace{-4mm}
\end{figure}
 \textcolor{Mycolor3}{The above-mentioned information constraint which will be in place throughout this paper together with the observation structure assumed in eq. (\ref{eq: state vs observation}) are of the aspects that distinguish our work from many of the related works in the literature of multi-agent communications \cite{mostaani2019Learning,tung2021effective}.}
 Now let the function $\bg(t^{'})$ denote the system's \textit{return}:
 \vspace{-1mm}
{\small\begin{equation} \label{eq: cumulative rewards}
    \bg(t^{'})= {\sum}_{t=t^{'}}^{M}\gamma^{t-1} r\big(\bs(t),\bm(t)\big).
    \vspace{-1mm}
\end{equation}}
Note that $\bg(t^{'})$ is a random variable and a function of $t^{'}$ as well as the trajectory $\{\rtr (t)\}_{t=t^{'}}^{t=M}$. Due to the lack of space, here we drop a part of the arguments of this function. \textcolor{Mycolor3}{ In contrast to the centralized problem, the goal of the decentralized problem is to jointly design the communication/quantization as well as $\pi^c_i(\cdot)$ control policies $\pi^m_i(\cdot)$ for each agent $i \in \mathcal{N}$ to maximize the average return of the system.} \textcolor{Mycolor3}{ The control policy $\pi^m_i: \mathcal{M}\times \mathcal{C}^{n-1} \times  \Omega \rightarrow [0,1]$ of each agent $i$ is defined as CMF
\vspace{-1mm}
{\small\begin{align} \label{decentralized action policy}
    & \pi^m_i\Big(\rm_i(t) \Big{|} \ro_i(t),{\rc}_{-i}(t) \Big) = \notag \\
    & \mathrm{Pr} \Big(\bm_i(t)=\rm_i(t) \Big{|} \bo_i(t)=\ro_i(t),{\bc}_{-i}(t)={\rc}_{-i}(t) \Big),
\vspace{-2mm}
\end{align}}
in which, $\rc_{-i}(t) \in \mathcal{C}^{n-1}$ is a vector that includes all communication messages $\rc_j(t), \,\, \forall j \in \mathcal{N}_{-i}$. The communication policy $\pi^c_i :  \Omega \times \mathcal{C}^{n-1} \rightarrow \mathcal{C} $ of each agent $i$ is a deterministic data quantization (many to one) function:
\vspace{-1mm}
{\small\begin{align}\label{decentralized communication policy}
    & \rc_i(t) = \pi^c_i\Big( \ro_i(t),{\rc}_{-i}(t) \Big),
    \vspace{-1mm}
\end{align}}}
\textcolor{Mycolor3}{ which has a discrete domain $\Omega \times \mathcal{C}$, making the quantizer a discrete quantizer.} The joint control policy $\pi^m$ is a tuple made of $n$ elements with its $i$-th element being $\pi^m_i(\cdot)$. Similarly, The joint communication policy $\pi^c$ is another tuple with its $i$-th element being $\pi^c_i(\cdot)$.

According to the above definitions, the decentralized joint control and communication design problem is formalized as
{\small\begin{align}\label{decentralized problem - State Aggregation}
& \underset{\pi^m_i, \pi^c_i}{\text{max }} 
& & \mathbb{E}_{p_{\pi^m, \pi^c}
\big(\{\rtr (t)\}_{t=1}^{t=M}\big)}                  \Big{\{}
   \bg(1) 
\Big{\}},
  \; \; \; \; i \in \mathcal{N} \notag\\
& \text{s.t.} & &  
    \textcolor{Mycolor3}{log_2|\mathcal{C}| \leq R},
\end{align}}
\textcolor{Mycolor3}{  where the expectation is taken over ${p_{\pi^m, \pi^c}
\big(\{\rtr (t)\}_{t=1}^{t=M}\big)}$ which is the joint PMF of $\rtr(1), \rtr(2),\allowbreak ...,\rtr(M)$ when each agent $i \in \mathcal{N}$ follows the action policy $\pi^m_i(\cdot)$ and the communication policy $\pi^c_i(\cdot)$ and the initial state $\bs(1) \in \mathcal{S}$ is randomly selected by the initial distribution $\bs(1) \sim \alpha_\bs$.} \textcolor{Mycolor3}{ Given communication policy $\pi^c_i(\cdot), \,\, \forall i \in \mathcal{N}$, we now define the perception function $h_i(\cdot): \mathcal{S} \rightarrow \mathcal{C}^{n-1} \times \Omega$ of agent $i$ which is the lens through which agent $i$ perceives the state $\bs(t)$ of the environment.
\begin{align} \label{eq: perception}
 & h_i\big( \bs(t) \big) = \\ 
 & \langle \pi_1^c(\bo_1(t)), \pi_2^c(\bo_2(t)), ..., \bo_i(t), \pi_{i+1}^c(\bo_{i+1}(t)), ..., \pi_{n}^c(\bo_{n}(t))
 \rangle \notag
\end{align}
Agent $i$'s perception of the environment is characterized by the communication policy $\pi^c_j(\cdot)$ of each agent $j \in \mathcal{N}_{-i}$. Accordingly, agent $i$ uses its sensory signal $\bo_i(t)$ together with the received communication signals $\bc_{-i}(t)$ to acquire its perception of the environment. While the perception function defined here plays a role very similar to the observation function in Dec-POMDPs \cite{oliehoek2008optimal}, the main difference is that here we design communication policies such that they directly affect the perception of agents from the environment. In contrast, in the case of Dec-POMDPs, { the observation function is given}. Communication policies $\pi^c_j(\cdot), \forall j \in \mathcal{N}_{-i}$ partially define the perception function of agent $i$.}

To make the problem more concrete, further to (\ref{decentralized action policy}) and (\ref{decentralized communication policy}),  here we assume the presence of instantaneous and synchronous communications between agents, contrasting with the delayed \cite{mostaani2019Learning,oliehoek2016concise} and sequential communication models. \textcolor{Mycolor3}{ Fig. \ref{fig: Sequencial decision making} demonstrates this communication model during a single time-step.} As such, each agent $i$ at any time step $t$ prior to the selection of its action $\rm_i(t)$ receives \textcolor{Mycolor3}{ the communication vector ${\rc}_{-i}(t)$ that encodes the observations of each agent $j \in \mathcal{N}_{-i}$ at time $t$.}


\begin{figure}[htbp!]
  \centering 
      \includegraphics[width=0.480\textwidth]{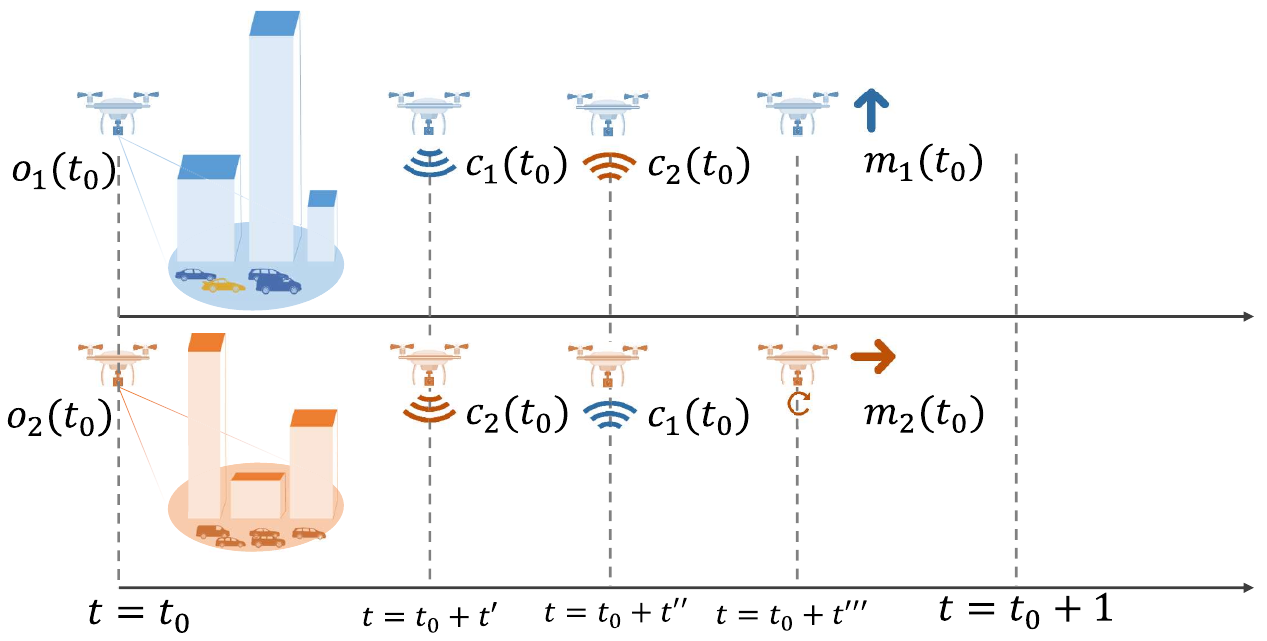} 
      \vspace{-2mm}
  \caption{Ordering of observation, communications and action selection for synchronous and instantaneous communication model in a multi-UAV object tracking example, with $0<t'<t''<t'''<1$. At time $t=t_0$ both agents (UAVs) make local observations on the environment. At time $t=t_0+t'$ both agents select a communication signal to be generated. At time $t=t_0+t''$ agents receive a communication signal from the other agent. At time $t=t_0+t'''$ agents select a domain level action, here it can be the movement of UAVs or rotation of their cameras etc.}
   \label{fig: Sequencial decision making}
   \vspace{-1mm}
\end{figure}

In a general approach, the selection of communication action $\rc_i(t)$ at agent $i$ could be conditioned on both $\ro_i(t)$ and ${\rc}_{-i}(t)$. \textcolor{Mycolor3}{ Since we assume instantaneous and synchronous inter-agent communications, here we are focused on communication policies of type $\pi^c_i\big(\ro_i(t) \big)$, where communication actions of each agent at each time are selected only based on its observation at that time. For clear reasons, it is not possible to adopt a synchronous and instantaneous inter-agent communication model and yet take the communication message $\rc_{-i}(t)$ into account when selecting the communication $\rc_i(t)$ at agent $i$.} Here we assume that the communication resources are split evenly amongst the agents, by considering the bit-budget of all communication channels to be equal to \textcolor{Mycolor3}{$R$}. As such, each agent $i \in \mathcal{N}$ encodes its observation $\ro_i(t)$ to $\rc_i(t)$ using a code-book $\mathcal{C}$ of the same length $|\mathcal{C}|$ - with the constraint (\ref{eq: bit-budget constraint}) in place.

\vspace{-3mm}
\section{State Aggregation for Information Compression (SAIC) in multi-agent Coordination Tasks}\label{sec: SAIC}
\vspace{-1mm}
The main result of this section - provided by Theorem \ref{The main theorem} - is to show that finding the quantization policy in the joint control and quantization problem (\ref{decentralized problem - State Aggregation}) can be approximated by a TODC problem. The goal of this problem is to quantize the observations of all agents according to how valuable these observations are within any specific task. The value of observations should be measured by the value function $V^*(\cdot)$ - eq. (\ref{eq: define value function}). Lemma \ref{lemma: quantization to k-median} approximates the TODC to a k-median clustering of the of observations according to their values, while lemma \ref{lemma: compute optimal value} computes the value function of each agent's observation. The concluding remarks of this section study the convergence and the optimality of the decentralized control policies.

 \begin{figure*}[htbp!] 
  \centering 
      \includegraphics[width=0.97\textwidth]{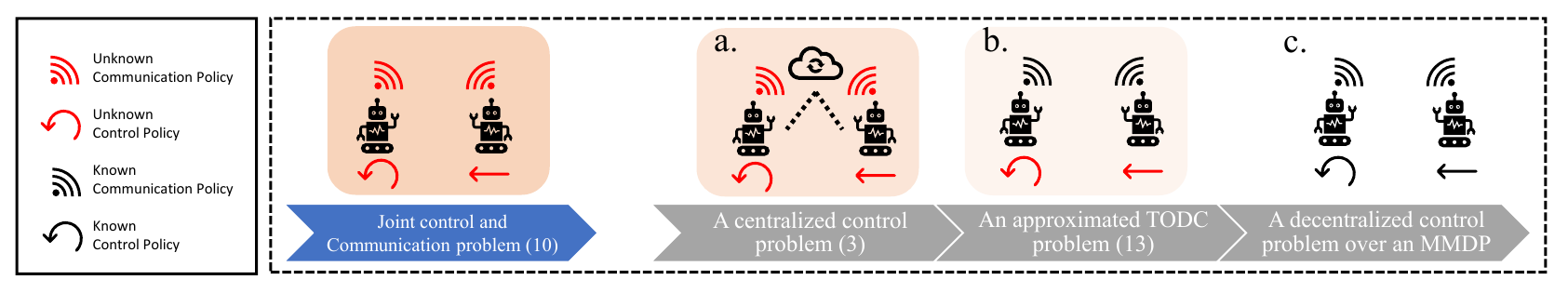} 
    \vspace{-0.2cm}
  \caption{ Here we show how we approached solving the joint control and communication problem for a distributed multi-agent system in a sequence of steps. According to the legend, one can understand that at the end of each step what are the known and unknown policies. a. This step solves the problem (\ref{centralized problem - general problem}) for a centralized multi-agent system where the
  objective is to design one centralized control strategy. b. This step solves the problem (\ref{k-median clustering problem - body}) for a distributed multi-agent system where the objective is to design the communication policies of all agents. c.  this step solves the problem for a distributed multi-agent system where the objective is to design the control policies of all agents.
 }
  \label{fig: joint problem break down}
\end{figure*}

 \textcolor{Mycolor3}{ Fig. \ref{fig: joint problem break down} is brought to demonstrate the chronological order according to which a joint communication and quantization is solved by SAIC. Our proposed scheme, SAIC, breaks down the joint communication and quantization problem to smaller problems that are feasible to solve.} In this section, however, the subsections are organized according to the logical order that these smaller problems are encountered: (A) in section \ref{subsect: TODQ problem} , we address the communication design of multi-agent communications by transforming the primary joint control and quantization problem (\ref{decentralized problem - State Aggregation}) to a novel problem (\ref{eq: generalized data quantization problem - body}) called TODC - step "b" of the Fig. \ref{fig: joint problem break down}. (B) Since solving the TODC problem relies on the knowledge of the value function $V^*(\cdot)$, it is necessary to obtain the value function $V^*(\cdot)$ prior to solving the TODC problem. In section \ref{subsec: centralized training}, the optimal value function $V^*(\cdot)$ is obtained via a centralized training phase - step "a" of the Fig. \ref{fig: joint problem break down}. Given the knowledge of the value function $V^*(\cdot)$, the TODC problem incorporates the features of the specific control task in the communication design problem. Accordingly, we can separately solve the communication problem with very little compromise on the optimality of the system's expected return. (C) As the final step, in section \ref{subsect: decentralized training phase}, decentralized training phase is carried out to distributively design the control policy of each agent given the communication/quantization policy obtained via solving the TODC problem. Decentralized training is shown in step "c" of the Fig. \ref{fig: joint problem break down}. Since we follow standard methods to carry out the centralized training - steps "a" of the  Fig. \ref{fig: joint problem break down} - we will be mainly focused on deriving and solving the TODC problem and providing guarantees on the performance of the MAS in the decentralized training phase - steps "b" and "c" of the Fig. \ref{fig: joint problem break down}  respectively. 
\textcolor{Mycolor3}{ Fig. \ref{fig: illustration of SAIC} illustrates how SAIC performs data compression while it maintains the performance of the multi-agent system in its task.}

 \begin{figure}[htbp!] 
  \centering 
      \includegraphics[width=0.450\textwidth]{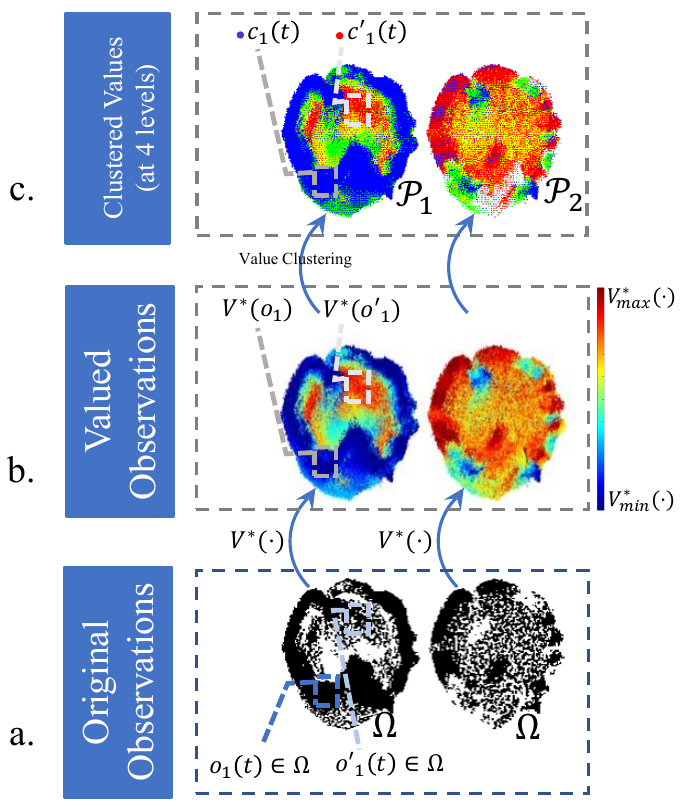} 
    \vspace{-0.2cm}
  \caption{\textcolor{Mycolor3}{ The subplots of this figure illustrate how in SAIC we transform a high-dimensional ($\sigma$-dimensions) and high-precision observation space into aggregated one-dimensional low-precision/digitized communication message space. This figure is plotted for a scenario where $R=2$ (bits per channel use) and thus, observation values are clustered at $2^R = 4$ different levels. a. A 2D demonstration of the original high-dimensional and high-precision observation space of agents is shown here in black and white. b. After carrying out the centralized training phase we will obtain the value function $V^*(\cdot)$ - which acts as indirect measure of the usefulness of observation data to be communicated. Now by applying the value function $V^*(\cdot)$ at every point of the original observation space we get valued observations - a one-dimensional high-precision space as the output space of the value function $V^*(\cdot)$. c. By clustering the observation points according to their corresponding values for each agent $i$ we would get a one-dimension and low-precision/digitized communication message space. The quantization illustrated in this diagram is using only 4 levels of quantization that are represented by 4 colours. All the points in the observation space of the agent $i$ which are represented by the same colour, in subplot c, will be represented by a unique communication message - i.e., the accuracy of the original data is reduced and hence requires fewer communication bits to be transmitted. Accordingly, agent 1, after observing $\ro_1(t)$ transmits the communication message $\rc_1(t)$ which is a compressed version of $\ro_1(t)$ while it maintains the performance of the multi-agent team in maximizing their expected return.}}
  \label{fig: illustration of SAIC}
\end{figure}

\vspace{-1mm}

\vspace{-0mm}
\subsection{Task-Oriented Data Compression Problem}\label{subsect: TODQ problem}
The main result of this section is provided by Theorem \ref{The main theorem}. \textcolor{Mycolor3}{ This theorem departs from the joint communication/quantization and control problem and arrives at the task-oriented data compression problem  (\ref{eq: generalized data quantization problem - body})}.
\begin{theorem}\label{The main theorem}
 The design of the communication policy in problem (\ref{decentralized problem - State Aggregation}) can be approximated as a generalized data quantization problem
 \textcolor{Mycolor3}{{\small\begin{align}\label{eq: generalized data quantization problem - body}
&  \underset{\pi^c_i(\cdot)}{\text{min}}
& & \!\!\mathbb{E}_{p_{\pi^m,\pi^c}
\big( h_i\big( \rs(1) \big) \big)}
\Big{\{}
 \big{|} V^{*}\big( \bs(1) \big) - V^{*}\big( h_i\big( \bs(1) \big) \big)  \big{|}          \Big{\}} \notag \\
& \text{s.t.} 
& &  \!\! \textcolor{Mycolor3}{log_2|\mathcal{C}| \leq R},
\end{align}}}
in which the measure of distortion is the absolute difference of the value functions $V^{*}\big(\bs(t)\big)$ and $V^{*}\big( h_i \big( \bs(1) \big) \big)$ with the source of information $\bs(t) \in \Omega^n$ being a Markovian stochastic process. \textcolor{Mycolor3}{ The function $V^{*}\big( h_i\big( \bs(1) \big) \big) $ measures the optimal value of the perceived state $h_i\big( \bs(1) \big)$ from agent $i$'s perspective.}
\end{theorem}

\vspace{-2mm}
\begin{proof}
    Appendix \ref{appendix: proof of the main theorem}.
\end{proof}

\textcolor{Mycolor3}{ In Appendix \ref{subsect: proof: lem: optimal value of the aggregated state}, we provide more details on how to obtain the value  $V^{*}\big( h_i\big( \bs(1) \big) \big) $ of the perceived state from the agent $i$'s point of via Lemma \ref{lem: optimal value of the aggregated state}. This value function allows us to indirectly quantify the usefulness of agent $i$'s observation. With this interpretation in mind, in the TODC problem (\ref{eq: generalized data quantization problem - body}), unlike conventional quantization problems, we are not minimizing the absolute difference between the original signal $\bs(1)$ and its quantized version $h_i\big( \bs(1) \big)$. Instead, we are minimizing the distance between how useful/valuable the original signal $\bs(1)$ is and how useful the quantized version of the signal $h_i\big( \bs(1) \big)$ are for the task at hand. This is in-line with what many believe as the mission of the goal-oriented/task-oriented communications. Let us recall that the value function here is an \emph{indirect} measure of usefulness, as it can be obtained for any task that can be expressed via Markov Decision Processes - making it a measure of usefulness that is applicable to a plethora of scenarios \cite{mostaani2021task,azari2022evolution}.}

The significance of the result obtained by Theorem \ref{The main theorem} is multi-fold: (i) Multi-dimensional observations will be transformed to one-dimensional output space of the value functions, reducing the complexity of the clustering algorithm, (ii) It can be shown that the observation points will be linearly separable when being clustered according to the problem (\ref{eq: generalized data quantization problem - body}), (iii) It is widely accepted that the mission of goal oriented communications is to incorporate the usefulness/value of the data for the task when designing the task-effective communications. The result of Theorem \ref{The main theorem}, in which the design of the quantizer relies on the value/usefulness of observations resonates well with this purpose of goal-oriented communications. (iv) It is known that the value of observations starts to grow as we get closer to the ultimate target of the task in hand. With this interpretation of "target" in mind, the finding of Theorem \ref{The main theorem} is in line with the adaptive quantization schemes, which stretch the quantization intervals when the observations are far from the target and sharpen the quantization when the observations are closer to the target \cite{nair2004stabilizability, yuksel2013jointly}. This interpretation is also confirmed by our numerical results in section \ref{Numerical results - Section}, Fig. \ref{fig: state-aggregation - centralized learning}.

To solve a quantization problem as (\ref{eq: generalized data quantization problem - body}) using non-variational techniques, it is customary to approximate/convert a quantization problem by/to a clustering problem \cite{lloyd1982least,linde1980algorithm}. Lemma \ref{lemma: quantization to k-median} approximates the quantization problem (\ref{eq: generalized data quantization problem - body}) by a clustering problem.

\begin{lemma} \label{lemma: quantization to k-median}
 The quantization problem (\ref{eq: generalized data quantization problem - body}) can be \textcolor{Mycolor3}{approximated by a clustering problem}
     \begin{equation}\label{k-median clustering problem - body}
        \begin{aligned}
        &  \underset{\mathcal{P}_i}{\text{min}}
        & & {\sum}_{k=1}^{{|\mathcal{C}|}} {\sum}_{\ro_i(t) \in \mathcal{P}_{i,k}} \Big{|}                 V^{*}\big(\ro_i(t)\big) - \mu^{'}_k \Big{|}, 
        \end{aligned}
    \end{equation}
    where $\mu'_k$ is the centroid of the $k$-th cluster $\mathcal{P}_{i,k}$ and $\mathcal{P}_i = \{ \mathcal{P}_{i,1}, \dots, \mathcal{P}_{i,{|\mathcal{C}|}} \}$ is a partition of the observation space $\Omega$. Similar to any other quantization function, the quantizer $\pi^c_i(\cdot)$, can be uniquely described by the partition $\mathcal{P}_i$ together with $\mathcal{C}$.
\end{lemma}
\begin{proof}
Appendix \ref{appendix: proof: lemma: quantization to k-median} provides proof and discussions.
\end{proof}

\textcolor{Mycolor3}{ The problem (\ref{k-median clustering problem - body}), can be solved via k-median clustering. In order to that, one can first perform the k-median clustering on the observation values by solving
     \begin{equation}
        \begin{aligned}
        &  \underset{\mathcal{V}_i}{\text{min}}
        & & {\sum}_{k=1}^{2^B} {\sum}_{V^*(\ro_i(t)) \in \mathcal{V}_{i,k}} \Big{|}                 V^{*}\big(\ro_i(t)\big) - \mu^{''}_k \Big{|}, \notag
        \end{aligned}
    \end{equation}
where $\mathcal{V}_i$ is the set of all observation values of agent $i$ and $\{\mathcal{V}_{i,1}, ..., \mathcal{V}_{i,|\mathcal{C}|}\}$ is its partition. Afterwards, as shown in Figure \ref{fig: illustration of SAIC}, the observation points should be clustered according to the clustering of their corresponding values. That is, any two distinct observation points $\ro'_i, \ro''_i \in \Omega$ are clustered together in $\mathcal{P}_{i,j}$ if and only if their values $V^*(\ro'_i), V^*(\ro''_i) \in \mathcal{V}_{i,j}$ are in the same cluster $\mathcal{V}_{i,j}$.} 

Theorem \ref{The main theorem} together with lemma \ref{lemma: quantization to k-median} allows us to find a communication/quantization policy $\pi^{c^{}}_i(\cdot)$ by clustering the input space $\Omega$ of the communication policy according to the values $V^{*}\big( \ro_i(t)\big)$ of the input points. The performance guarantees for the obtained communication/quantization policy will be shown in section \ref{sec: error bound}. One can obtain $V^{*}\big( \ro_i(t)\big)$ via solving the centralized problem (\ref{centralized problem - general problem}) by Q-learning. The subsection \ref{subsec: centralized training}, details a centralized training approach for obtaining the value observations $V^{*}\big( \ro_i(t)\big)$. 

\subsection{Centralized Training Phase} \label{subsec: centralized training}

While solving the TODC problem can provide us with a task-effective design of quantization policy, to solve (\ref{eq: generalized data quantization problem - body}) we need to know the value of observations according to the optimal centralized control policy. By solving the centralized problem (\ref{centralized problem - general problem}), the value of joint observations and actions $Q^{*}\big(\rs(t),\rm(t)\big)$ can be obtained. \textcolor{Mycolor3}{ Let us recall that the centralized training phase will only yield an optimal policy if the environment is jointly observable - as described by condition \ref{condition: joint observability}.}
\begin{condition} \label{condition: joint observability}
    \begin{align} \label{joint observability}
     \rs(t) = \langle \ro_1(t),..., \ro_n(t) \rangle.
  \end{align}
\end{condition}
Accordingly, following the lemma \ref{lemma: compute optimal value} we can compute the value of each agent's observations $V^{*}\big( \ro_i(1)\big)$. But before lemma \ref{lemma: compute optimal value},  let us first give an intuitive/philosophical meaning of the centralized training and distributed execution. We know that in task-oriented communication design, our goal is to take into account the usefulness/value of the data for the task in hand. Thus we need to first be able to measure the usefulness/value of the data to be transmitted. The centralized training phase is needed to come up with a precise measure of usefulness for the specific task in hand. We have already shown in Theorem \ref{The main theorem}, that this measure of usefulness is nothing but the value observations $V^{*}\big( \ro_i(1)\big)$ - yet the exact function values can be known only after the centralized training phase. During the centralized training phase, we assume perfect communication between all agents and a central controller - this is a common practice in the literature of multi-agent communications and coordination \cite{FoersterLearning, FoersterCounter}. Whereas, in the decentralized training - step "c" of the Fig. \ref{fig: joint problem break down} - as well as in the execution phase, we assume bit-budgeted communications. That is, all the results reported for SAIC in section \ref{Numerical results - Section} are obtained via bit-budgeted communications.

\textcolor{Mycolor3}{ \begin{lemma} \label{lemma: compute optimal value}
 One can compute the $V^{*}\big( \ro_i(1)\big)$ following
 {\small\begin{align} \label{value function iterated expectation simplified - State aggregation - body}
   V^{*}\big(\ro_i(t)\big) = 
   \sum_{\ro_{-i}(t) \in \Omega^{n-1}}    \underset{\rm}{\text{max }} \; Q^{*}\big(\rs(t),\rm(t)\big)   p\big(\bo_{-i}(t) = \ro_{-i}(t)\big).       \notag
\end{align}}
\end{lemma}}
\begin{proof}
    Appendix \ref{appendix: proof: lemma: compute optimal value}.
\end{proof}
Based on (\ref{value function iterated expectation simplified - State aggregation - body}), $V^{*}\big(\ro_i(1)\big)$ can be computed both analytically (if transition probabilities of environment are available) and numerically. \textcolor{Mycolor3}{ As detailed in Algorithm 1, SAIC first solves a centralized control problem to compute the value $V^{*}(\ro)$ for all $\ro \in \Omega$ - this is equivalent to the step "a" of the Fig. \ref{fig: joint problem break down} and subplot (b) of the Fig. \ref{fig: illustration of SAIC}. Afterwards, SAIC solves the approximated TODC problem (\ref{eq: generalized data quantization problem - body}) by converting it to a k-median clustering (\ref{k-median clustering problem - body}), leading to an observation aggregation/quantization function for each agent $i$ determined by  $\pi^{c}_i(\cdot)$ - this is equivalent to the step "b" of the Fig. \ref{fig: joint problem break down} and the subplot (c) of the Fig. \ref{fig: illustration of SAIC}.} By following this aggregation function, the observations $\ro_i(t) \in \Omega$ will be aggregated/quantized such that the performance of the multi-agent system in terms of the objective function it attains is optimized. As SAIC uses a deterministic mapping of observation $\ro_i$ to produce the communication message $\rc_i$, SAIC is guaranteed to have positive signalling \cite{lowe2019pitfalls}.

\subsection{Obtaining Decentralized Control Policies via a Decentralized Training Phase} \label{subsect: decentralized training phase}

\textcolor{Mycolor3}{ Upon the availability of the $\pi^{c}_i(\cdot), \,\, \forall i \in \mathcal{N}$, which was obtained by solving problem (\ref{k-median clustering problem - body}), we need to find control policies for all agents corresponding to the communication policies $\pi^{c}_i(\cdot), \,\, i \in \mathcal{N}$. That is, we now solve the problem (\ref{decentralized problem - State Aggregation}) by plugging the exact communication policy $\pi^{c}_i(\cdot) \,\, \forall i \in \mathcal{N}$ into it. Within this training phase - referred to as the decentralized training phase - control $Q$-tables $Q^m_i(\cdot) \,\, \forall i \in \mathcal{N}$ are obtained - step "c" of the Fig. \ref{fig: joint problem break down}. This training phase, as well as the execution phase of the algorithm, can both be carried out distributively, while agents communicate over bit-budgeted channels using the communication policies obtained before in section \ref{subsect: TODQ problem}. The following remarks are brought to characterize the performance of SAIC, in the decentralized training phase.}

\textcolor{Mycolor3}{ We now first define the concept of lumpablity, according to which we will then set a condition - Condition \ref{condition: lumpability} - for the correctness of remarks 3 and 4.
}

\textcolor{Mycolor3}{
\begin{definition}\textbf{Lumpability of an MDP:} \label{lumpablity}
Let $\alpha_{\bs}$ be the probability distribution of the { initial state} of an MDP at the initial step. The MDP is called (strongly) lumpable with respect to the perception function $h_i(\cdot)$ if the the transitions between all the perceived states $h_i(\bs(t))$ - which are perceived through the lens of $h_i(\cdot)$ - follow Markov rule for every probability distribution $\alpha_{\bs}$ of the initial state of the original MDP \cite{rubino1989weak}. 
\end{definition}}

 \begin{condition} \label{condition: lumpability}
      Let the environment as perceived from the perspective of agent $i$ within the decentralized training phase be called an aggregated MDP denoted by $\Big{\{} \Omega \times \mathcal{C}^{n-1} , \mathcal{M}, r(\cdot), \gamma, T'(\cdot) \Big{\}}$, whereas the state space of the aggregated MDP $\Omega \times \mathcal{C}^{n-1}$ is an image of $\Omega^n$ under the perception function $h_i(\cdot)$. Now given the definition \ref{lumpablity}, assuming the lumpability of the underlying MDP $\Big{\{} \Omega^n , \mathcal{M}^n, r(\cdot), \gamma, T(\cdot) \Big{\}}$ with respect to $h_i(\cdot)$ is equivalent to the assumption that the aggregated $\Big{\{} \Omega \times \mathcal{C}^{n-1} , \mathcal{M}, r(\cdot), \gamma, T(\cdot) \Big{\}}$ is an MDP under every possible $\alpha_\bs$. This assumption is in place for the correctness of remarks 3 and 4.
 \end{condition}


\begin{algorithm}\label{SAIC - Algorithm}
\caption{ State Aggregation for Information Compression (SAIC)}
\begin{algorithmic}[1]

\State \small \textbf{Input:} $\gamma$, $\alpha$, $c$
 \State \textbf{Initialize} all-zero table $N^m_{i}\big(\ro_i(t), {\rc}_{-i}(t),\rm_i(t)\big)$, for $i \in \mathcal{N}$ 
 \State $\;\;\;\;\;\;\;\;\;\;\;\;\;\;\!$ and Q-table $Q^{m}_{i}(\cdot) \leftarrow Q^{m,(k-1)}_{i}(\cdot)$, for $i \in \mathcal{N}$
 \State $\;\;\;\;\;\;\;\;\;\;\;\;\;\;\!$ and all-zero Q-table $Q\big(\ro_i(t),\ro_j(t),\rm_i(t),\rm_j(t)\big)$.
 \State Obtain $\pi^{*}(\cdot) \text{ and } Q^{*}(\cdot)$ by solving (\ref{centralized problem - general problem}) using Q-learning \cite{Suttonintroduction}. 
 \State Compute $V^{*}\big( \ro_i(t) \big)$ following eq. (\ref{value function iterated expectation simplified - State aggregation - body}), for $\forall \ro_i(t) \in \Omega$.
 \State Solve problem (\ref{k-median clustering problem - body}) by applying k-median clustering to obtain $\pi^c_i(\cdot)$, for $i \in \mathcal{N}$.
  \For{each episode $k=1:K$} 
 \State Randomly initialize local observation $\ro_i(t=1)$, for $i \in \mathcal{N}$
\For{$t_k = 1:M$}

\vspace{1mm}          
            \State Select $\rc_i(t)$ following $\pi^{c}_i(\cdot)$, for $i \in \mathcal{N}$

            \State Obtain message $ {\rc}_{-i}(t)$, for $i \in \mathcal{N}$
            
            \State Update $Q^{m}_{i}\big(\ro_i(t-1), {\rc}_{-i}(t-1),\rm_i(t-1)\big)$
            , for $i \in \mathcal{N}$
            
            \State Select $\rm_i(t) \in \mathcal{M}$ following UCB, for $i \in \mathcal{N}$
            \State Increment $N^m_{i}\big(\ro_i(t), {\rc}_{-i}(t),\rm_i(t)\big) $, {for} $i \in \mathcal{N}$
            \State Obtain reward $r\big( \rs(t),\rm(t) \big)$, {for} $i \in \mathcal{N}$
            
            \State Make a local observation $\ro_i(t)$, for $i \in \mathcal{N}$
    \vspace{1mm}

\State $t_k=t_k+1$

\EndFor\label{euclidendwhile-2}
\State \textbf{end}
\State Compute $\sum^{M}_{t=1} \gamma^t r_t$ for the $l$th episode
\EndFor
\State \textbf{end}
\vspace{1mm}

\State \textbf{Output:} $Q^{m}_{i}(\cdot)$, 
\State $\;\;\;\;\;\;\;\;\;\;\;\;\;\;\;\!$and $\pi_{i}^{m}\big(\rm_i(t)|\ro_i(t), {\rc}_{-i}(t)\big)$ by following greedy policy
{for} $i \in \mathcal{N}$

%
\vspace{-1mm}
\end{algorithmic}
\end{algorithm}


\textcolor{Mycolor3}{\emph{Remark 1:} The optimal policy $\pi^*(\cdot)$ is achievable by the centralized training phase. Assuming Condition \ref{condition: joint observability} to hold, the environment is fully observable for the central controller while the central controller posses the ability to jointly select the actions for all agents. The problem will thus reduce to a single agent Q-learning applied on an MDP with asymptotic convergence to the optimal policy $\pi^*(\cdot)$.}

\textcolor{Mycolor3}{
\emph{Remark 2:} During the decentralized training phase, each agent, instead of viewing the environment as the original underlying MDP denoted by $\Big{\{} \Omega^n , \mathcal{M}^n, r(\cdot), \gamma, T'(\cdot) \Big{\}}$, views an aggregated form of the original MDP denoted by $\Big{\{} \Omega \times \mathcal{C}^{n-1} , \mathcal{M}, r(\cdot), \gamma, T'(\cdot) \Big{\}}$. The aggregated MDP experienced by agent $i$ will be an MDP itself, if the conditions \ref{condition: joint observability} and \ref{condition: lumpability} hold.}

\textcolor{Mycolor3}{
\emph{Remark 3:} The MAS, during the decentralized training phase, will be composed of $n$ different MDPs with identical state space $\Omega \times \mathcal{C}^{n-1}$, action space $\mathcal{M}$ and reward signal. The resulting multi-agent environment will be, according to the definition, a multi-agent MDP (MMDP)  \cite{boutilier1999multiagent}. }

\emph{Remark 4:} Within the distributed training phase, distributed Q-learning is applied to a deterministic MMDP \footnote{The definition of MMDP in \cite{boutilier1999multiagent} is identical to the definition of cooperative MAMDP used in \cite{lauer2000distributedQ}.}, which leads to an asymptotically optimal control policy \cite{lauer2000distributedQ} \footnote{This training phase can result in an asymptotically optimal control policy of all agents for non-deterministic MMDPs. This, however, will require $n$ additional centralized training phases prior to the decentralized training phase, where $n$ is the number of agents.}. For this remark to be true conditions \ref{condition: joint observability} and \ref{condition: lumpability} must hold.

\textcolor{Mycolor3}{
Note that the control policy $\pi^{m,SAIC}_i(\cdot)$ that is obtained within the distributed training phase of SAIC is optimal for the given communication policy $\pi^{c, SAIC}(\cdot)$, that was obtained within the centralized training phase. Therefore, $\pi^{m,SAIC}_i(\cdot)$ is not necessarily an optimal solution to the problem (\ref{decentralized problem - State Aggregation}). In Theorem section \ref{sec: error bound}, however, we set an upper-bound on the possible loss on the expected return of the system due to the joint selection of $\pi^{m,SAIC}_i(\cdot)$ and $\pi^{c, SAIC}(\cdot)$.}

\section{Characterizing the error bound of SAIC} \label{sec: error bound}
\textcolor{Mycolor3}{
 As discussed in section \ref{sec: SAIC}, SAIC uses two approximations to solve the original joint quantization and control problem. It was not, however, explained that how these approximation would impact the performance of SAIC in terms of the system's average return. By extending the results of \cite{abel2016near} to a multi-agent scenario, we characterize the performance gap of SAIC proposed in section \ref{sec: SAIC}. Instead of measuring the difference between the average return obtained by SAIC with that of the jointly optimal policies for the problem (\ref{decentralized problem - State Aggregation}), in Theorem \ref{theorem: error bound}, we measure the performance gap between the average return attained by SAIC with that of the centralized controller - whereas the latter has had access to perfect communications and as well as full observability of the environment. The measured gap is, indeed, larger than the performance gap between SAIC and a hypothetical jointly optimal solution to (\ref{decentralized problem - State Aggregation}), as in the case of the central controller there is no communication/observation limitation in place. The performance gap between SAIC and the centralized solution provided by Theorem \ref{theorem: error bound} is proposed in terms of the discount factor $\lambda$ of the task and a positive scalar $\epsilon$. Definition \ref{def: cost-uniform} details the notion of $\epsilon$-cost uniform. Lemma \ref{lem: compute epsilon} is proposed to compute the value of $\epsilon$ for SAIC. } 

\textcolor{Mycolor3}{
\begin{definition}\label{def: cost-uniform}
Given a positive number $\epsilon$ a subset $\mathcal{P}_{i,k} \subset \Omega$ is said to
be $\epsilon$-cost-uniform with respect to the policy $\pi(\cdot)$ if the following conditions hold for
two arbitrary observations $\ro' , \ro'' \in \mathcal{P}_{i,k}$:
{\small\begin{align}
        & c_1: & &  \mathcal{M}_\pi(\ro' ) = \mathcal{M}_\pi(\ro'') \\
        & c_2: & &  \text{For any } \rm \in \mathcal{M}_\pi(\ro' ) : | Q^{\pi}(\ro', \rm) - Q^{\pi}(\ro'', \rm)| < \epsilon,
\end{align}}
where $\mathcal{M}_\pi(\ro' ) = \big{\{} \rm \in \mathcal{M} :   \pi(\rm | \ro') > 0 \big{\}}$.
\end{definition} }
\vspace{-2mm}

\textcolor{Mycolor3}{
\begin{theorem}\label{theorem: error bound}
 Consider a multi-agent system in which agents are subject to local observability and local action selection. If agents are allowed to communicate through communication channels with a bit-budget $R$-bits at each time step, the maximum achievable expected return of the multi-agent system following SAIC algorithm will be in a small neighbourhood of the same MAS if it was controlled with a centralized unit under perfect communications:
    {\small \begin{align}
        & \mathbb{E}_{p_{\pi^*}(\{\rtr (t)\}_{t=t_0}^{t=M})}\big{\{}  \bg(t_0)  \big{\}} - \mathbb{E}_{p_{\pi^m_i,\pi^c_i}(\{\rtr (t)\}_{t=t_0}^{t=M})}\big{\{}  \bg(t_0)  \big{\}} < \notag \\
        & \frac{2 \,\epsilon }{(1-\gamma)^2},
    \end{align}}
    where $\gamma$ is the discount factor and $\epsilon$ should be computed according to lemma \ref{lem: compute epsilon}, conditioned on the lumpability of the original MDP - Condition \ref{condition: lumpability}.
\end{theorem}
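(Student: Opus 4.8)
The plan is to establish Theorem \ref{theorem: error bound} as the multi-agent counterpart of the approximate state-abstraction value-loss bound of \cite{abel2016near}. The communication policy $\pi^{c,SAIC}_i(\cdot)$ produced by the $k$-median clustering of Theorem \ref{The main theorem} partitions each agent's observation space $\Omega$ into the cells $\mathcal{P}_{i,k}$ that the compressed message $\tilde{\rc}_j(t)$ can distinguish; from agent $i$'s viewpoint this collapses the $\ro_j$-coordinate of the full joint state $\langle \ro_i,\ro_j\rangle$ into a single aggregated symbol, so that running SAIC amounts to acting on the aggregated MDP $\{\Omega\times\mathcal{C},\mathcal{M},r,\gamma,T'\}$ rather than on the original $\{\Omega^2,\mathcal{M}^2,r,\gamma,T\}$. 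Under the lumpability hypothesis of the theorem (Remarks 3--5) this aggregated process is itself Markov --- indeed a deterministic cooperative MMDP --- so distributed Q-learning recovers its optimal joint policy and the single-agent abstraction machinery can be imported.

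I would then proceed in three steps and close by passing to expectations. First, verify that the cells $\mathcal{P}_{i,k}$ are $\epsilon$-cost-uniform in the sense of Definition \ref{def: cost-uniform}, with $\epsilon$ supplied by Lemma \ref{lem: compute epsilon}: since the clustering groups observations whose centralized value $V^*(\ro_i)$ --- obtained from $Q^*$ through Lemma \ref{lemma: compute optimal value} --- lies within the clustering radius, conditions $c_1$ and $c_2$ hold on every cell. Second, bound the gap between the centralized optimal action-value $Q^*$ and the optimal action-value $Q^*_{\mathrm{agg}}$ of the aggregated MDP: the $\epsilon$-cost-uniformity lets one replace, within a single application of the Bellman optimality operator, the exact per-observation values by the cell-averaged values of Lemma \ref{lem: optimal value of the aggregated state} at an additive cost of $\epsilon$, and iterating this $\gamma$-contraction as a geometric series gives $|Q^*(\cdot)-Q^*_{\mathrm{agg}}(\cdot)|\le \epsilon/(1-\gamma)$. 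Third, invoke the standard fact that a policy greedy with respect to an action-value function that is $\delta$-close to $Q^*$ forfeits at most $2\delta/(1-\gamma)$ in value; with $\delta=\epsilon/(1-\gamma)$ and SAIC's control policy being $Q^*_{\mathrm{agg}}$-optimal (Lemmas \ref{lem: max of expectation} and \ref{lem: optimal value of the aggregated state}), the pointwise value loss is at most $2\epsilon/(1-\gamma)^2$. Taking the expectation of this pointwise inequality over the initial observation law and using Lemma \ref{lem: Adam's law on the value function} to identify each expected return with the expectation of its value function at $t=t_0$ then yields
{\small
\begin{equation*}
\mathbb{E}_{p_{\pi^*}}\!\big\{\bg(t_0)\big\}-\mathbb{E}_{p_{\pi^m_i,\pi^c_i}}\!\big\{\bg(t_0)\big\}<\frac{2\epsilon}{(1-\gamma)^2}.
\end{equation*}}

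The hardest part will be justifying the leap from the single-agent abstraction bound to the genuinely distributed two-agent problem, and here lumpability is the linchpin on two fronts. It is what keeps the observation-aggregated transition kernel Markov for every initial state distribution, so that the contraction argument of the second step operates on a true MDP rather than on a non-Markov surrogate; and it is what reduces the decentralized system to a deterministic cooperative MMDP on which distributed Q-learning provably converges to the centralized-optimal joint policy (Remark 5), so that decentralization itself contributes no loss beyond the aggregation term and the entire gap to the original centralized optimum reduces to the $2\epsilon/(1-\gamma)^2$ penalty. A secondary care point is that the clustering criterion is phrased through the scalar value $V^*(\ro_i)$, whereas Definition \ref{def: cost-uniform} constrains the per-action $Q^*$ differences that the Bellman argument actually consumes; Lemma \ref{lem: compute epsilon} must be used to certify that value-based clustering indeed controls those $Q^*$ gaps.
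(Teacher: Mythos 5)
Your proposal is correct and takes essentially the same route as the paper: both rest on the lumpability hypothesis to treat the observation-aggregated process as a true MDP, use the $k$-median cells' $\epsilon$-cost-uniformity from Lemma \ref{lem: compute epsilon} to obtain the pointwise bound $|V^*(\ro_i,\ro_j)-V^m(\ro_i,\tilde{\rc}_j)|<2\epsilon/(1-\gamma)^2$, and then average that bound over the initial observation law (via Lemma \ref{lem: Adam's law on the value function}) to get the stated gap in expected return. The only difference is that you re-derive the pointwise bound yourself (through the $|Q^*-Q^*_{\mathrm{agg}}|\le\epsilon/(1-\gamma)$ contraction step and the standard $2\delta/(1-\gamma)$ greedy-policy loss), whereas the paper imports it directly as Lemma 1 of \cite{abel2016near}.
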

}
\begin{proof}
    Appendix \ref{appendix: proof: theorem: error bound}.
\end{proof}

In Theorem \ref{theorem: error bound}, we will show that the error gap between
\vspace{-1mm}
\textcolor{Mycolor3}{
\begin{lemma} \label{lem: compute epsilon}
 Given the partition $\mathcal{P}_i = \{\mathcal{P}_{i,1}, ..., \mathcal{P}_{i,2^R}\} $ that is obtained by solving eq. (\ref{k-median main - State Aggregation}) during the centralized training phase, all subsets $\mathcal{P}_{i,k} $ for $k \in \{1,2,..., 2^R\}$ are  $\epsilon$-cost-uniform with respect to the optimal joint policy $\pi^*(\cdot)$ where $\epsilon$ can be obtained by the following
    {\small\begin{equation}
        \epsilon/2 = \underset{k,\ro_i}{\text{max}} \,\,  \Big{|} V^{*}\big(\ro_i(t)\big) - \mu^{'}_k \Big{|}.
    \end{equation}}
\end{lemma}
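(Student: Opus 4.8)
The plan is to read the value of $\epsilon$ directly off the optimal k-median partition and then verify that this choice forces both clauses of Definition \ref{def: cost-uniform} on every cluster. First I would fix the partition $\mathcal{P}_i$ returned by the clustering problem (\ref{k-median main - State Aggregation}) solved in the centralized phase. By construction each observation $\ro'\in \mathcal{P}_{i,k}$ is attached to the cluster whose median $\mu'_k$ minimizes $|V^{*}(\ro') - \mu'_k|$, so defining $\epsilon/2 := \max_{k,\ro_i}|V^{*}(\ro_i(t)) - \mu'_k|$ immediately yields the pointwise guarantee $|V^{*}(\ro') - \mu'_k| \leq \epsilon/2$ for every member of cluster $k$; this is exactly the quantity the lemma claims.

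The second step is a triangle inequality. For any pair $\ro',\ro'' \in \mathcal{P}_{i,k}$ drawn from the same cluster,
\[
|V^{*}(\ro') - V^{*}(\ro'')| \leq |V^{*}(\ro') - \mu'_k| + |\mu'_k - V^{*}(\ro'')| \leq \frac{\epsilon}{2} + \frac{\epsilon}{2} = \epsilon,
\]
so co-clustered observations can never differ in optimal value by more than $\epsilon$.

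The third step turns this value gap into the cost bound $c_2$. Here I would identify the per-agent quantity $V^{*}(\ro_i)$ of Lemma \ref{lemma: compute optimal value} with the greedy value of the optimal joint policy $\pi^{*}(\cdot)$, so that evaluating the relevant $Q^{*}(\ro_i,\rm)$ at any greedy action $\rm \in \mathcal{M}_{\pi^*}(\ro_i)$ reproduces precisely $V^{*}(\ro_i)$. The pairwise value bound then transfers verbatim to $|Q^{*}(\ro',\rm) - Q^{*}(\ro'',\rm)| \leq \epsilon$, which is clause $c_2$.

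The hard part will be clause $c_1$, and a related subtlety in the identification just used, both of which I expect to resolve through the theorem's lumpability hypothesis rather than through the clustering metric. Closeness of optimal values does not by itself force two co-clustered observations to share a greedy support $\mathcal{M}_{\pi^*}$; this constancy follows only once lumpability (Remarks 3--4) makes the aggregated process on $\Omega \times \mathcal{C}$ an MDP whose greedy policy is constant across each aggregated state, which is exactly $c_1$. Moreover, the identification of $V^{*}(\ro_i)$ with a per-agent greedy $Q$-value is clean only on this aggregated MDP, since in the pre-aggregation expression of Lemma \ref{lemma: compute optimal value} the outer maximization over $\rm_i$ and the expectation over the co-agent observation $\ro_j$ need not commute. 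I would therefore present lumpability as the genuine workhorse of the argument: it simultaneously certifies $c_1$ and licenses the $V^{*}$--$Q^{*}$ identification, after which the stated $\epsilon = 2\max_{k,\ro_i}|V^{*}(\ro_i(t)) - \mu'_k|$ is precisely the cost radius of each cluster.
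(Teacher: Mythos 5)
Your proposal is correct, and its computational core is exactly what the paper intends: the paper's own proof of this lemma is a single sentence (``Following definition~\ref{def: cost-uniform} and eq.~(\ref{k-median main - State Aggregation}) the proof is straightforward''), and the content it is gesturing at is precisely your first two steps --- read $\epsilon/2$ off as the worst-case deviation of $V^{*}(\ro_i)$ from its cluster median, then apply the triangle inequality to bound the value gap between any two co-clustered observations by $\epsilon$. Where you go beyond the paper is in flagging that clause $c_1$ (equality of the greedy supports $\mathcal{M}_{\pi^*}$ across a cluster) and the identification of $V^{*}(\ro_i)$ with a per-agent greedy $Q$-value do not follow from the clustering objective alone; the paper silently elides both points, even though its Definition~\ref{def: cost-uniform} is stated in terms of $Q^{\pi}(\ro,\rm)$ while the k-median problem only controls $V^{*}(\ro_i)$, and even though in Lemma~\ref{lemma: compute optimal value} the maximization over actions sits inside the expectation over the co-agent's observation, so the two quantities need not coincide. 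Your decision to route both issues through the lumpability hypothesis is consistent with how the paper conditions Theorem~\ref{theorem: error bound}, though note that lumpability as defined (Markovianity of the aggregated transition kernel) does not by itself guarantee that the optimal greedy support is constant on each cluster --- that is an additional assumption your writeup, like the paper's, ultimately leaves implicit. In short: your argument matches and is strictly more careful than the paper's; the residual gap you identify is real and is present in the original as well.
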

\begin{proof}
 Following definition \ref{def: cost-uniform} and eq. (\ref{k-median clustering problem - body}) the proof is straightforward.
\end{proof}}

\vspace{-5mm}
\section{Performance Evaluation} \label{Numerical results - Section}
\vspace{-2mm}
\textcolor{Mycolor3}{In this section, we evaluate our proposed schemes via numerical results for a particular geometric consensus problem with finite observability called the rendezvous problem. Geometric consensus problems arise in numerous emerging applications such as UAV/vehicle platooning - making them a meaningful application area for the framework proposed by this paper \cite{barel2017come}. The numerical results achieved by SAIC will prove the suitability of the proposed framework as a potential enabling technology for vehicle/UAV platooning under limited communications.}

\textcolor{Mycolor3}{The rendezvous problem, which is a sub-category of the geometric consensus,} has been previously investigated in the literature \cite{zilber2001communication,amato2009incremental}, whereas in our case the inter-agent communication channel is set to have a limited bit-budget. The rendezvous problem is of particular interest to us, also because it allows us to consider a cooperative MAS comprising of multiple agents that are required to communicate for their coordination task. In particular, as detailed in subsection \ref{rendezvous problem - subsection}, if the communication between agents is not efficient, at any time step $t$ each agent $i$ will only have access to its local observation $\ro_i(t)$, which is its own location in the case of rendezvous problem. This mere information is insufficient for an agent to attain the larger reward $C_2$, but is sufficient to attain the smaller reward $C_1$. Accordingly, compared with cases in which no communication between agents is present, in the set up of the rendezvous problem, efficient communication policies can increase the attained objective function of the MAS up to six-folds, as will be seen in Fig. 4. The system operates in discrete time, with agents taking actions and communicating in each time step $t=1,2,...$ . We consider a variety of grid worlds with different size values $N$ and different locations for the goal-point $\omega^T$. We compare the proposed SAIC and LBIC with (i) the centralized Q-learning scheme and (ii) the Conventional Information Compression (CIC) scheme which is explained in subsection \ref{conventional communication - subsection}. \textcolor{Mycolor3}{ Changing the reward function can also build new scenarios. For example, a reward function that encourages the agents to come together as close as possible but not collide with each other can emulate a vehicle platooning scenario. While useful, it is outside the scope of our work to investigate the response of the multi-agent system to different rewarding schemes.}  \textcolor{Mycolor3}{Note that, according to Theorem \ref{The main theorem}, regardless of the definition of the reward function, the geometric consensus problem (or in general the joint quantization and control problem) can be solved by SAIC if the necessary Conditions \ref{condition: joint observability} and \ref{condition: lumpability} are met, and centralized training phase is feasible. As the number of agents $n$ increases, the Q-learning for the centralized training phase becomes increasingly demanding in terms of computational complexity; this is where SAIC's bottleneck lies.}

 \subsection{Rendezvous Problem} \label{rendezvous problem - subsection}
  \begin{figure}[thb!]
  \centering
      \includegraphics[width=0.47\textwidth]{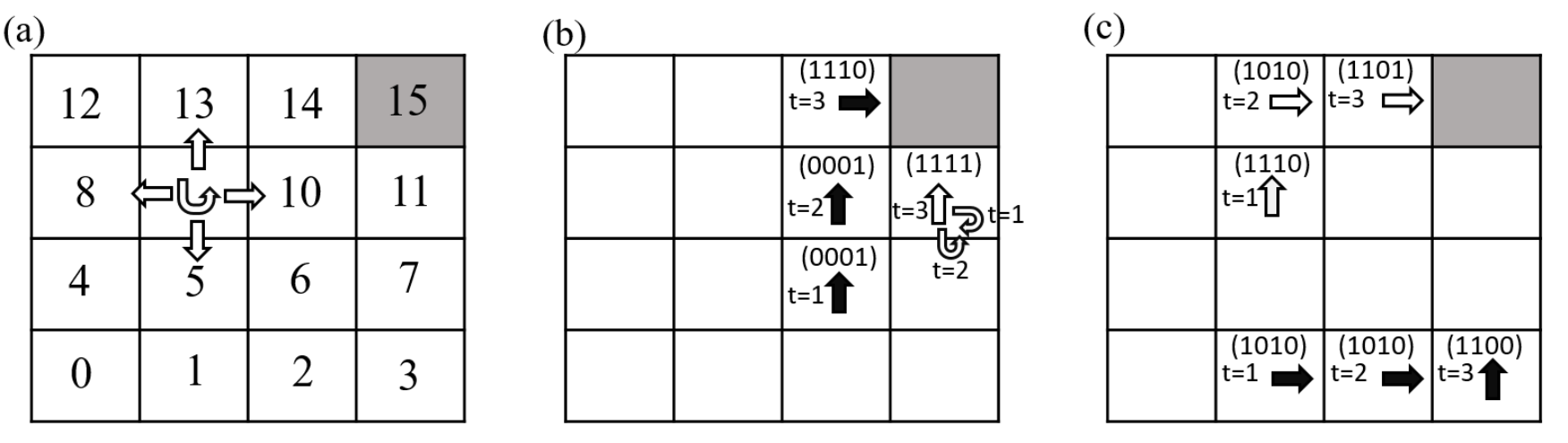}
  \caption{The rendezvous problem when $n=2$, $N=4$ and $\omega^T=15$: (a) illustration of the observation space, $ \Omega$, i.e., the location on the grid, and the environment action space $\mathcal{M}$, denoted by arrows, and of the goal state $\omega^T$, marked with gray background; (b) demonstration of a sampled episode, where arrows show the environment actions taken by the agents (empty arrows: actions of agent 1, solid arrows: actions of agent 2) and the $B=4$ bits represent the message sent by each agent. A larger reward $C_2> C_1$ is given to both agents when they enter the goal point at the same time, as in the example; (c) in contrast, $C_1$ is the reward accrued by agents when only one agent enters the goal position \cite{mostaani2019Learning}.}
  \label{fig: rendezvous problem}
\end{figure}

 As illustrated in Fig. \ref{fig: rendezvous problem}, in a rendezvous problem, multiple agents operate on an $N \times N$ grid world and aim at arriving at the same time at the goal point on the grid. Each agent $i \in \mathcal{N}$ at any time step $t$ can only observe its own location $\ro_i(t) \in \Omega$ on the grid, where the observation space is $\Omega = \{0,1,...,n^2-1\}$. Each episode terminates as soon as an agent or more visit the goal point which is denoted as $\omega^T \in \Omega$. That is, at any time step $t$ that the observation of each agent $i \in \mathcal{N}$ is a member of $\Omega^T$, the episode will be terminated - so the time horizon $M$ is non-deterministic. The subset $\mathcal{S}^T \subset \mathcal{S}$ also defines all state realizations where one or more agents are in the goal location i.e.,
 
 $\mathcal{S}^T = \{ \langle \ro_1(t), ..., \ro_n(t) \rangle \in \mathcal{S} \, | \,  \exists i \in \mathcal{N} : \ro_i(t) \in \omega^T \}$.
 
 We also define the subset $\mathcal{S}^T_{n'} \subset \mathcal{S}^T$ that includes all the terminal states where only $n'$ number of agents have arrived at the goal location i.e.,
 
  $\mathcal{S}^T_{n'} = \{ \langle \ro_1(t), ..., \ro_n(t) \rangle \in \mathcal{S} \, | \,  \forall i \in \mathcal{N}' : \ro_i(t) \in \omega^T \}$,
  
  where $\mathcal{N}' \subseteq \mathcal{N}$ is a subset of all agents with size $| \mathcal{N}' | = n'$. Following the same definition for $\mathcal{S}^T_{n'}$, the subset $\mathcal{S}^T_n$ is equivalent to the set of all terminal states where all agents are at the goal location. At time $t=1$, the initial position of all agents is randomly and uniformly selected amongst the non-goal states, i.e., for each agent $i \in \mathcal{N}$ the initial position of the agent is $\ro_i(1) \in \Omega - \{ \omega^T \}$.

At any time step $t=1,2,...$ each agent $i$ observes its position, or environment state, and acquires information about the position of the other agents by receiving a communication message vector $ {\rc}_{-i}(t)$ sent by the other agents $j \in \mathcal{N}_{-i}$ at the time step $t$.  Based on this information, agent $i$ selects its environment action $\rm_i(t)$ from the set $\mathcal{M} = \{\text{Right},\text{Left},\text{Up},\text{Down},\text{Stop}\}$, where an action $\rm_i(t) \in \mathcal{M}$ represent the horizontal/vertical move of agent $i$ on the grid at time step $t$. For instance, if an agent $i$ is on a grid-world as depicted on Fig. \ref{fig: rendezvous problem} (a), and observes $\ro_i(t)=4$ and selects "Up" as its action, the agent's observation at the next time step will be $\ro_i(t+1)=8$. If the position to which the agent should be moved is outside the grid, the environment is assumed to keep the agent in its current position. We assume that all these deterministic state transitions are captured by $T\big(\ro_1(t), ..., \ro_n(t),\rm_1(t), ..., \rm_n(t)\big)$, which can determine the observations of agents in the next time step $t+1$ following
{\small\[
\langle \ro_1(t+1), ..., \ro_n(t+1) \rangle
= T\big(\ro_1(t), ..., \ro_n(t),\rm_1(t), ..., \rm_n(t)\big).
\]}
Accordingly, given observations $\langle \ro_i(t+1), ..., \ro_n(t+1) \rangle$ and actions $ \langle \rm_1(t+1), ..., \rm_n(t+1) \rangle $, all agents receive a single team reward
{\small\begin{equation}\label{example environment noise distribution}
   r\big( \ro_1(t), ..., \ro_n(t),\rm_1(t), ..., \rm_n(t) \big)=
   \begin{cases}
    C_1, & \text{if  $P_1$}\\
    C_2, & \text{if  $P_2$},\\
    0, & \text{otherwise},\\
   \end{cases}
\end{equation}}
where $C_1 < C_2$ and the propositions $P_1$ and $P_2$ are defined as $P_1: T\big(\ro_1(t), ..., \ro_n(t),\rm_1(t), ..., \rm_n(t)\big) \in \mathcal{S}^T - \mathcal{S}^T_n$ and $P_2: T\big(\ro_1(t), ..., \ro_n(t),\rm_1(t), ..., \rm_n(t)\big) \in \mathcal{S}^T_n$. When only a subset $\mathcal{N}',\,\, |\mathcal{N}| = n' < n$ of agent arrives at the target point $\omega^T$, the episode will be terminated with the smaller reward $C_1$ being obtained, while the larger reward $C_2$ is attained only when all agents visit the goal point at the same time. Note that this reward signal encourages coordination between agents which in turn can benefit from inter-agent communications. 

Furthermore, at each time step $t$ agents choose a communication message to send to the other agent by selecting a communication action $\rc_i(t) \in \mathcal{C} = \{0,1\}^R$ of $R$ bits, where $R$ (bits per channel use / per time step) is the fixed bit-budget of all inter-agent communication channels.
 The goal of the MAS is to maximize the average return by solving the problem (\ref{decentralized problem - State Aggregation}).

\vspace{-5mm}
\subsection{Conventional Information Compression In multi-agent Coordination Tasks} \label{conventional communication - subsection} 
\vspace{-2mm}
As a baseline, we consider a conventional scheme that selects communications and actions separately. For communication, each agent $i$ sends its observation $\ro_i(t)$ to the other agents by following policy $\pi^c_i(\cdot)$. According to this policy the agent's observation $\ro_i(t)$ will be mapped to a binary bit sequence $\rc_i(t)$, using an injective (and not necessarily surjective) mapping $f_1: \Omega \rightarrow \{0,1\}^R$. Consequently, the communication policy $\pi^c_i$ becomes deterministic and follows 
\textcolor{Mycolor3}{
\begin{equation}
    \pi^c_i\big(\rc_i(t+1)| \ro_i(t) \big) = \delta\Big(\! \rc_i(t+1) - f_1\big( \ro_i(t) \big)\!\Big)\!.
\end{equation}}
Agent $i$ obtains an estimate $ {\rc}_j(t)$ of the observation of all agents $j \in \mathcal{N}_{-i}$ by having access to a quantized version of $\ro_j(t)$. This estimate is used to define the environment state-action value function $Q^m_j \big(\ro_i(t), {\rc}_{-i}(t),\rm_i(t)\big)$. This function is updated using Q-learning and the UCB policy in a manner similar to Algorithm 1, with no communication policy to be learned.

This communication strategy is proven to be optimal \cite{pynadath2002communicative}, if the inter-agent communication does not impose any cost on the cooperative objective function, the communication channel is noise-free and the bit-budget of communication channels are larger than the entropy rate of the observation process $R \geq H(\bo_i)$. Under these conditions, and when the dynamics of the environment are deterministic, each agent $i$ can distributively learn the optimal policy $\pi^m_i(\cdot)$, using value iteration or its model-free variants e.g., Q-learning \cite{lauer2000distributedQ}. While this communication policy is optimal only with a channel bit-budget $R \geq H(\bo_j)$, in this paper, we are focused on the scenarios with $R \leq H(\bo_j)$. Therefore, due to the bit-budget of the communication channel, a form of TODC is required.

Note that compression before a converged action policy is not possible, since all observations are a priori equally likely. Thus, we first train the CIC on a communication channel with unlimited capacity. Afterwards, when a probability distribution for observations is obtained, by applying Lloyd's algorithm \cite{lloyd1982least}, we define an equivalence relation on the observation space $\Omega$ with $2^{R}$ numbers of equivalence classes $\mathcal{Q}_1,..., \mathcal{Q}_{2^{R}}$. According to the defined equivalence relation by Lloyd's algorithm, we can uniquely define the mapping $f_1: \Omega \rightarrow \{0,1\}^R$ that maps each agent $i$'s observation $\ro_i(t)$ to a communication message $\rc_i(t)$. The inverse $f^{-1}_1(\cdot)$ of the quantization mapping that maps agent $j$'s quantized observation $\rc_j(t)$ into a estimated observation is not an injective mapping anymore. That is, by receiving the communication message $ {\rc}_j(t) \in \mathcal{Q}_k \subset \mathcal{C}$ agent $i$ can not retrieve $\ro_j(t)$ but understands the observation of agent $j$ has been a member of $\mathcal{Q}_k$. Note that CIC algorithm has a limitation, as it requires the first round of training to be done over communication channels with unlimited capacity.

\vspace{-2mm}
\subsection{Results}

To perform our numerical experiments, rewards of the rendezvous problem are selected as $C_1=1$ and $C_2=10$, while the discount factor is $\gamma = 0.9$. A constant learning rate $\alpha=0.07$ is applied, and the UCB exploration rate $c=12.5$. In any figure that the performance of each scheme is reported in terms of the averaged discounted cumulative rewards, the attained rewards throughout training iterations are smoothed using a moving average filter of memory equal to 10\% of the experiment iterations. \textcolor{Mycolor3}{We will use the terms "value of the collaborative objective function", "value of the objective function" and "average return" interchangeably throughout this section.} Regardless of the grid-world's size and goal location, the grids are numbered row-wise starting from the left-bottom as shown in Fig. \ref{fig: rendezvous problem}-a. Apart from Fig. \ref{fig: C3 3-agent Comparisson} that illustrates the result related to a rendezvous problem for a three-agent system, other figures have been obtained when experimenting in a two-agent environment.
\textcolor{Mycolor3}{
Fig. \ref{fig: C2 Comparisson of all} illustrates the performance of the proposed SAIC as well as six other benchmark schemes
\begin{itemize}
    \item Centralized Q-learning under perfect communications.
    \item Learning based information compression (LBIC) is a different indirect scheme to design task-oriented communications which performs the joint design of communication and control policies through reinforcement learning following an algorithm similar to the one proposed in \cite{mostaani2019Learning}.
    \item CIC, \textcolor{Mycolor3}{see the details of CIC in subsection \ref{conventional communication - subsection}}.
    \item \textcolor{Mycolor3}{ Heuristic non-communicative (HNC) algorithm is a direct heuristic scheme which exploits the domain knowledge of its designer about the rendezvous task - making it not applicable to any other task rather than the rendezvous problem. The domain knowledge is utilized to design a control policy where no communication is present. In HNC, agents approach the goal point and wait next to it for a large enough number of time-steps to make sure the other agent has also arrived there. Only after that, they will get into the goal point. Note that this scheme requires communication/coordination between agents prior to the starting point of the task. }
    \item \textcolor{Mycolor3}{ Heuristic optimal communication (HOC) algorithm is a direct heuristic scheme which exploits the domain knowledge of its designer about the rendezvous task - making it not applicable to any other task rather than the rendezvous problem. The domain knowledge is utilized to design jointly optimal communication and control policies. In HNC, agents approach the goal point and wait next to it until they hear from the other agent it also has arrived there. Only after that, they will get into the goal point. Note that this scheme requires communication/coordination between agents prior to the starting point of the task.} 
    \item Hybrid scheme uses the abstract representation of agents' observations according to SAIC with $R=2$ bits and feeds these latent observations to a centralized controller. The central controller learns the joint action selection of both agents using Q-learning. 
\end{itemize}}

 \textcolor{Mycolor3}{ It is imperative to recall that, not all the schemes evaluated by Fig. \ref{fig: C2 Comparisson of all} are benefit from indirect designs - making them not sufficiently general to be applied to all other multi-agent communication problems with rate-limited inter-agent channels. Regardless of their effectiveness, SAIC, LBIC, CIC and Hybrid are indirect schemes potentially applicable to any other task-oriented compression problem. Whereas, HNC and HOC are tailor-made for the rendezvous problem. In other words, the knowledge that we have about the rendezvous task is already embedded in HNC and HOC to enable the most effective communication/control strategies. HNC and HOC, however, allow us to understand how effective other indirect approaches are even when no knowledge about the specific rendezvous task is embedded in them.} 
 
 The performance is measured in terms of the expected sum of discounted rewards in a rendezvous problem. The grid-world is considered to be of size $N=8$ and its goal location to be $\omega^T=22$. The bit-budget of the channel between the two agents is $R=2$ bits per time step. Since centralized Q-learning is not affected by the limitation on the channel's bit-budget, it achieves optimal performance after sufficient training, 160k iterations. The CIC, due to the insufficient bit-budget of the communication channel, never achieves the optimal solution. The LBIC, however, is seen to outperform the CIC, although it is trained and executed fully distributedly. \textcolor{Mycolor3}{ While enjoying a fast convergence, it is observed that the SAIC can achieve optimal performance by less than 1\% gap, whereas the performance gap for the LBIC and CIC are much more pronounced ranging from 20\% to 30\%. The yellow curve showing the performance of the CIC with no communication between agents would show us the best performance of distributed reinforcement learning that can be achieved if no communication between agents is in place \textcolor{Mycolor3}{without having any domain knowledge - that is present in the HOC and HNC}. \textcolor{Mycolor3}{ In fact, the better performance of any scheme compared with the yellow curve, is the sign that the scheme is either benefiting from some effective communication between agents or from some domain knowledge.} Note that, when inter-agent communication is unavailable, i.e., $R=0$ bit per time step, there would be no difference in the performance of the CIC, SAIC or LBIC as all of them use the same algorithm to find out the action policy $\pi^m_i(\cdot)$. We also recall the fact that both the CIC and SAIC require a separate training phase which is not captured by Fig. 5. SAIC requires a centralized training phase \textcolor{Mycolor3}{ - to perform the computations demonstrated in line 5 of the algorithm 1 -}} and CIC a distributed training phase with unlimited capacity of inter-agent communication channels. The performance of these two algorithms in Fig. 5 is plotted after the first phase of training.

Similar to Fig. \ref{fig: C2 Comparisson of all}, the performance of SAIC is illustrated in Fig. \ref{fig: C3 3-agent Comparisson}, this time in a $n=3$ three-agent system. In this case, the grid-world is considered to be of size $N=3$ and its goal location to be $\omega^T=9$. The bit-budget of the inter-agent communication channels is set to be $R=1$ bits per time step. The shaded area around the curve corresponding to SAIC, shows the standard deviation of SAIC in the training as well as the execution phases - at any given training episode $k$ the width of the shaded curve is equal to the standard deviation of SAIC's return from the training episode $k$ to the episode $k-1000$. This figure illustrates the very robust performance of SAIC in a three-agent scenario. For this particular experiment we used decaying epsilon greedy policies with the starting value of $\epsilon = 1$ and the ending value of $\epsilon = 0.03$. To overcome the issue of credit assignment in multi-agent systems - see e.g., \cite{FoersterCounter} to get familiar with the concept, here we used a different reward function via which we trained the agents. Accordingly, given observations $\langle \ro_i(t+1), ..., \ro_n(t+1) \rangle$ and actions $ \langle \rm_1(t+1), ..., \rm_n(t+1) \rangle $, all agents receive a single team reward
{\small\begin{equation}\label{example environment noise distribution}
   r\big( \ro_1(t), ..., \ro_n(t),\rm_1(t), ..., \rm_n(t) \big)=
   \begin{cases}
    C_2^{n'-1}, & \text{if  $P_3$},\\
    0, & \text{otherwise},\\
   \end{cases}
\end{equation}}
where the proposition $P_3$ is defined as $P_3: T\big(\ro_1(t), ..., \ro_n(t),\rm_1(t), ..., \rm_n(t)\big) \in \mathcal{S}^{T'}_n$. When a subset $\mathcal{N}',\,\, |\mathcal{N}| = n' \leq n$ of agent arrives at the target point $\omega^T$, the episode will be terminated with the reward $C_2^{n'-1}$ being obtained, while the largest reward $C_2^{n-1}$ is attained only when all agents visit the goal point at the same time. Note that this reward signal encourages coordination between agents which in turn can benefit from inter-agent communications. 

 \begin{figure}[htbp!] 
  \centering 
      \includegraphics[width=0.42\textwidth]{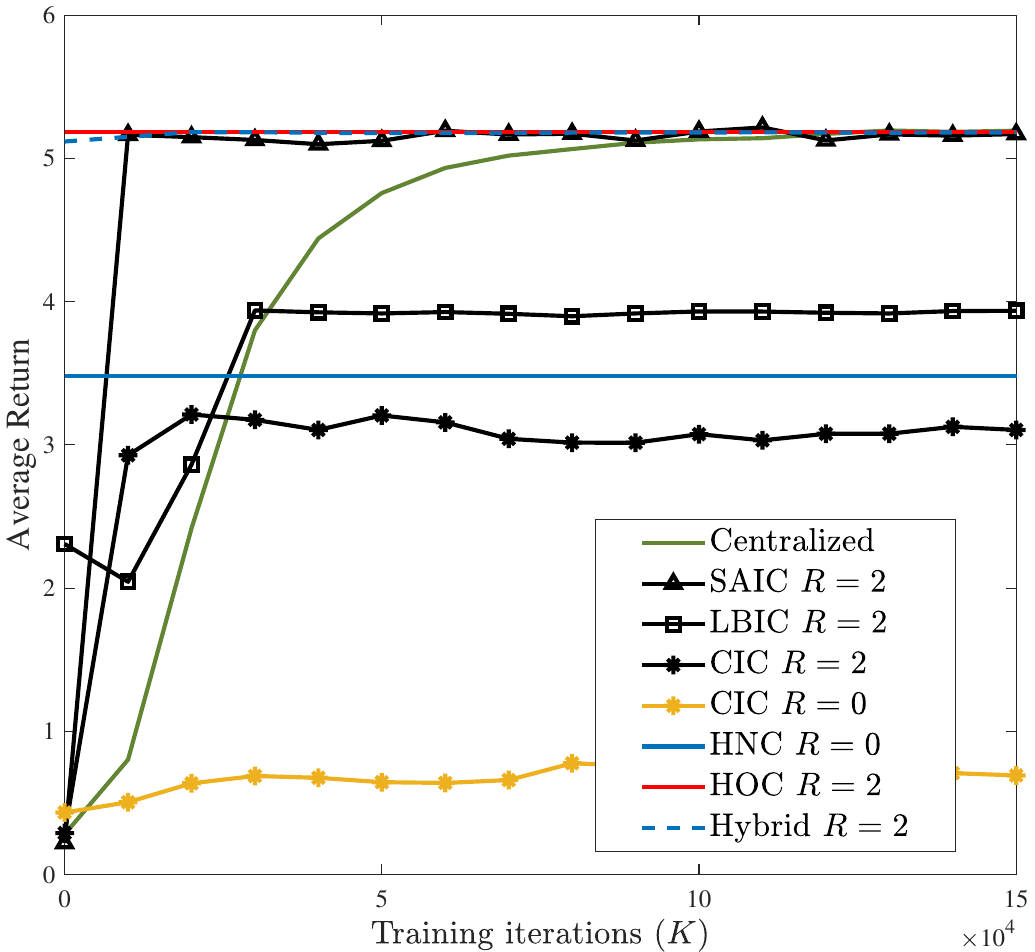} 
      \vspace{-2mm}
  \caption{A comparison between all seven schemes in terms of the achievable objective function with the bit-budget of $R=2$ bits per channel use/time steps and number of training iterations/episodes $K=200k$.}
  \label{fig: C2 Comparisson of all}
\end{figure}

 \begin{figure}[htbp!] 
  \centering 
      \includegraphics[width=0.42\textwidth]{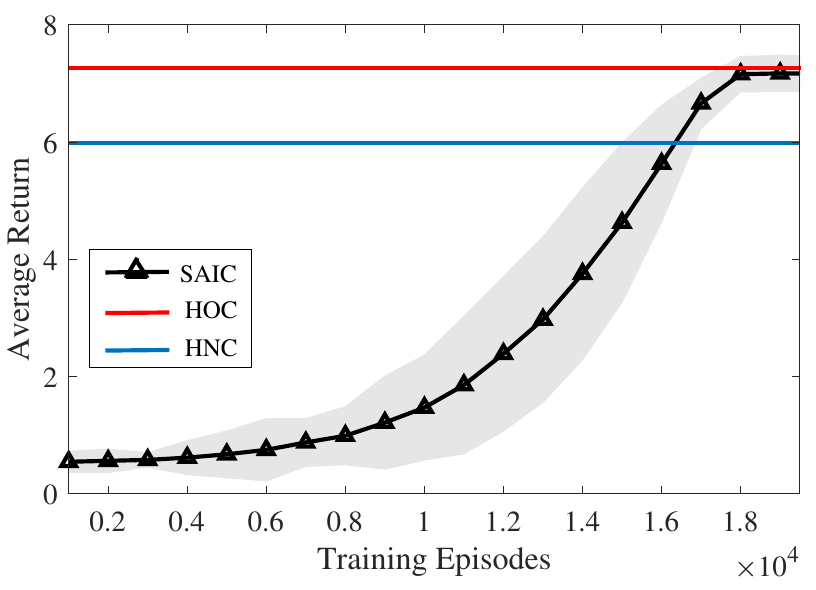} 
      \vspace{-2mm}
  \caption{A comparison between SAIC, HOC and HNC within a three-agent system in terms of the system's average return with the bit-budget of $R=1$ bit per time steps and number of training iterations/episodes $K=20k$. The shaded area around SAIC's curve shows the standard deviation of SAIC in its performance.}
  \label{fig: C3 3-agent Comparisson}
\end{figure}

\textcolor{Mycolor3}{ To explain the underlying reasons for the remarkable performance of the SAIC, Fig. \ref{fig: state-aggregation - centralized learning} is provided so that equivalence classes $\{\mathcal{P}_{i,1}, ..., \mathcal{P}_{i,2^R}\}$ computed by the SAIC can be seen - all the locations of the grid shaded with the same colour belongs to the same $\epsilon$-cost-uniform equivalence class.} The SAIC is extremely efficient in performing state aggregation such that the loss of observation information barely incurs any loss on the achievable sum of discounted rewards - also depicted in Fig. 5. The Fig. \ref{fig: state-aggregation - centralized learning}-(a), illustrates the state aggregation adopted by the SAIC, for which the average return is illustrated in Fig. 4. It is illustrated in Fig. \ref{fig: state-aggregation - centralized learning}-(a) that how the SAIC performs observation compression with ratio $R_c= 3:1$, while it leads to nearly no performance loss for the collaborative task of the MAS. Here the definition of compression ratio follows
$
    R_c = {\ceil*{H\big( \bo_i(t) \big)}}/ {\ceil*{H\big( \bc_i(t) \big)}}.
$
\vspace{-0mm}
 \begin{figure}[htbp!] 
  \centering 
      \includegraphics[width=0.47\textwidth]{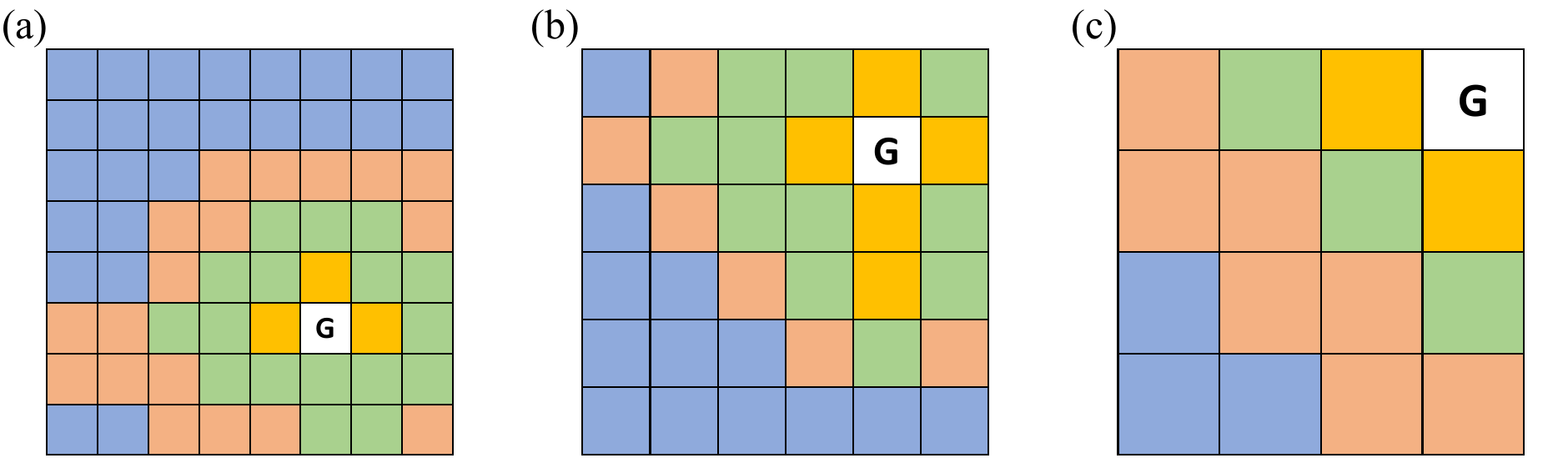} 
      \vspace{-0mm}
  \caption{State aggregation for multi-agent communication in a two-agent rendezvous problem with grid-worlds of varied sizes and goal locations. The observation space is aggregated to four equivalence classes, $R=2$ bits, and the number of training episodes has been $K=1500k$, $K=1000k$ and $K=500k$ for figures (a) and (b) and (c) respectively. Locations with similar colours represent all the agents' observations which are grouped into the same equivalence class. The data compression ratio $R_c$ has been seen to be 6:2, 5:2 and 4:2 in subplots a), b) and c) respectively. It is also observed that the observation clusters identified by SAIC have not been linearly separable under their original representation. In contrast, when clustered according to their values, observation points become linearly separable - see also Fig. \ref{fig: evaluation of value function approximation } .}
  \label{fig: state-aggregation - centralized learning}
\end{figure}
\vspace{-0mm}
It was observed in \ref{fig: state-aggregation - centralized learning} that the observation clusters identified by SAIC have not been linearly separable under their original representation. In contrast, when clustered according to their values,  as seen in Fig. \ref{fig: evaluation of value function approximation },  observation points become linearly separable. Fig. \textcolor{Mycolor3}{ Fig. \ref{fig: evaluation of value function approximation }, allows us to see how precise the approximation of $V_{{\pi^m}^{*},\pi^c}\big(\bo_i(1), {\bc}_{-i}(1)\big)$ by the value function $V^{*}\big(\ro_i(t), {\rc}_{-i}(t)\big)$ is - suggested by lemma \ref{lem: optimal value of the aggregated state}. The figure illustrates the values for both $V_{{\pi^m}^{*},\pi^c}\big(\bo_i(1), {\bc}_{-i}(1)\big)$ and $V^{*}\big(\ro_i(t),\ro_{-i}(t)\big)$, where $\ro_i(t)=21$ and $\ro_{-i}(t)$ can take on possible values in $ \Omega$. For instance the values $7.2$ mentioned on the right down corner of the grid demonstrates the value of $V^{*}\big(\ro_i(t),\ro_j(t)\big)$ when $\ro_i(t)=20$ and $\ro_j(t)=7$. This figure also allows finding the value of $\epsilon$ for all $\epsilon$-cost-uniform groups.}

 \begin{figure}[htbp!] 
  \centering 
      \includegraphics[width=0.49\textwidth]{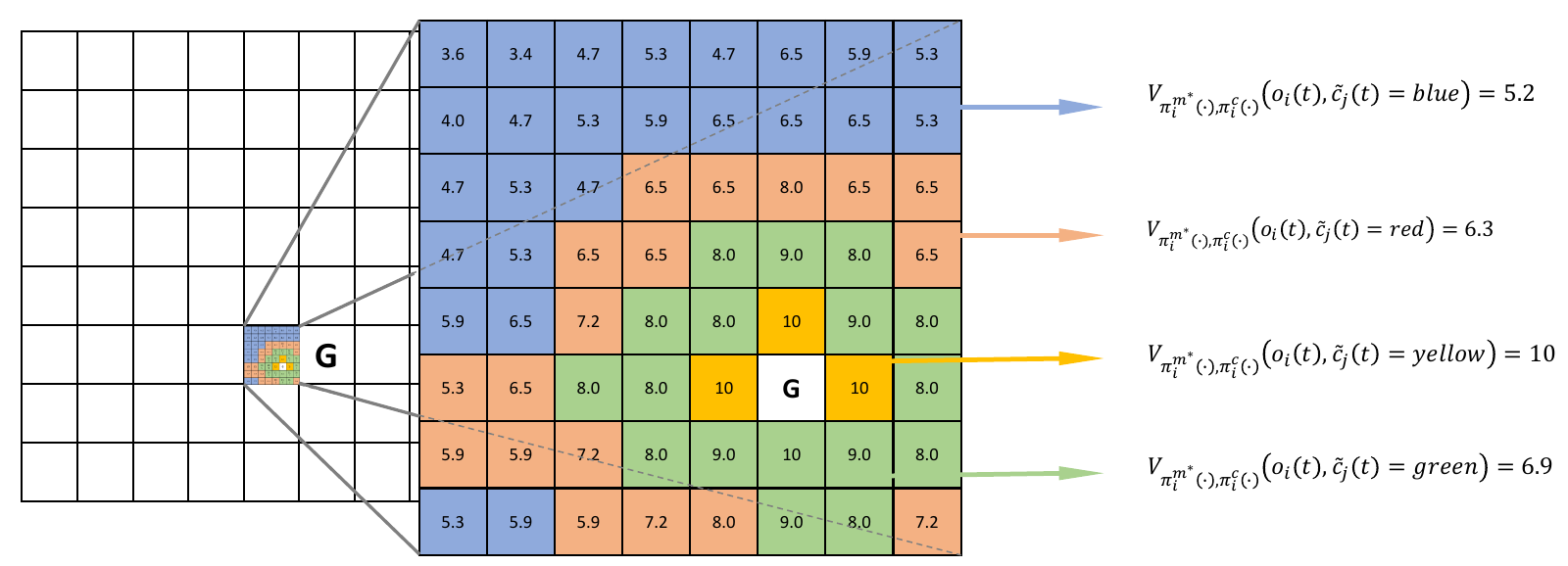}
      \vspace{-3mm}
  \caption{ Left grid-world shows the observation space $\Omega$, amongst which one particular observation is chosen $\ro_i(t)=20$. While agent $i$ makes this observation, agent $j$ can potentially be at any other 64 locations of the greed. The value function $V^{*}\big(\ro_i(t)=20,\ro_j(t) \big)$ for all $\ro_j(t) \in \Omega$ is depicted in the right grid-world, e.g. a number at location 22, shows the value function $V^{*}\big(\ro_i(t)=20,\ro_j(t)=22 \big) = 10$. You can also see the values of $V_{{\pi^m}^{*},\pi^c}\big(\bo_i(t), {\bc}_j(t)\big)$ for $\ro_i(t)=20$ and all possible $ {\rc}_j(t) \in \mathcal{C}$ with $R=2$ bits.} 
  \label{fig: evaluation of value function approximation }
\end{figure}

\textcolor{Mycolor3}{We also investigate the impact of channel bit-budget $R$ on the value of average return achieved by the LBIC, SAIC and CIC, in Fig. \ref{fig: Variable capacities}. In this figure, the normalized value of average return achieved for any scheme at any given \textcolor{Mycolor3}{$R$} is shown.  As per (\ref{normalization}), the average return for the scheme of interest is computed by $\mathbb{E}_{p_{\pi^m,\pi^c}(\{\rtr (t)\}_{t=1}^{t=M})}\big{\{}  \bg(1)  \big{\}}$, where $\pi^m_i(\cdot)$ and $\pi^c_i(\cdot)$ are obtained by the scheme of interest after solving (\ref{decentralized problem - State Aggregation}) with a given value of \textcolor{Mycolor3}{$R$}. The average return is then normalized by dividing it to the average return $ \mathbb{E}_{p_{\pi^{*}}(\{\rtr (t)\}_{t=1}^{t=M})}\big{\{}  \bg(1)  \big{\}} $ that is obtained by the optimal centralized policy $\pi^{*}(\cdot)$. The policy policy $\pi^{*}(\cdot)$ is the optimal solution to (\ref{centralized problem - general problem}) under no communications constraint.}
\begin{equation} \label{normalization}
    \frac{ \mathbb{E}_{p_{\pi^m,\pi^c}(\{\rtr (t)\}_{t=1}^{t=M})}\big{\{}  \bg(1)  \big{\}} }
    { \mathbb{E}_{p_{\pi^{*}}(\{\rtr (t)\}_{t=1}^{t=M})}\big{\{}  \bg(1)  \big{\}} }.
\end{equation}
Accordingly, when the normalized objective function of a particular scheme is seen to be close to the value $1$, it implies that the scheme has been able to compress the observation information with almost zero loss with respect to the achieved objective function. On one hand, it is demonstrated that the SAIC achieves the optimal performance \textcolor{Mycolor3}{ while running with 2 bits of inter-agent communications}, while it takes the CIC at least $\textcolor{Mycolor3}{R}=4$ bits to get to achieve a sub-optimal value of the objective function. The LBIC, on the other hand, provides more than 10\% performance gain in very low rates of communication $R \in \{1,2,3\}$ bits per time step, compared with CIC and 20\% performance gain compared with SAIC at $R=1$ bits per time step.

 \begin{figure}[htbp!] 
  \centering 
      \includegraphics[width=0.40\textwidth]{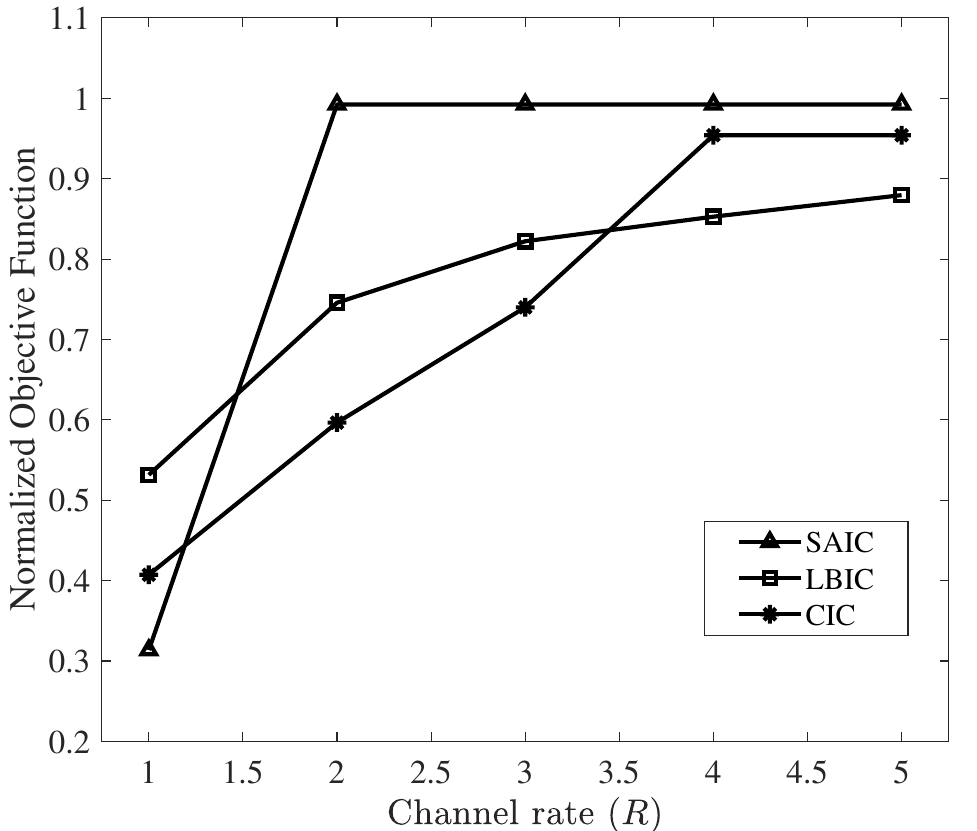} 
    \vspace{-0.2cm}
  \caption{A performance comparison between several multi-agent communication and control schemes under different achievable bit rates. All experiments are performed where $N=8$ and $\omega^T=21$, similar to the grid-world of Fig. \ref{fig: state-aggregation - centralized learning} -a. The number of training episodes/iterations for any scheme at any given channel bit-budget $R$ has been $K=200K$.}
  \label{fig: Variable capacities}
\end{figure}

Fig. \ref{fig: variable compression ratios}, studies the normalized objective functions attained by the LBIC, SAIC and CIC under different compression ratios $R_c$. A whopping 40\% performance gain is acquired by the SAIC, in comparison to the CIC, at high compression ratio $R_c=3:1$. \textcolor{Mycolor3}{This is equivalent to 66\% of saving in the bit-budget with no performance drop with respect to the collaborative objective function.} The SAIC, however, underperforms the LBIC and CIC at very high compression ratio of $R_c=6:1$. This is due to the fact that the condition mentioned in remark 2 is not met at this high rate of compression. Moreover, the CIC scheme is seen not to achieve the optimal performance even at the compression rate of $R_c=6:5$ which is due to the fact that by exceeding the compression ratio $R_c=1:1$ each agent $i$ may lose some information about the observation $\ro_j(t)$ of the other agent which can be helpful in taking the optimal action decision.

 \begin{figure}[htbp!] 
  \centering 
      \includegraphics[width=0.40\textwidth]{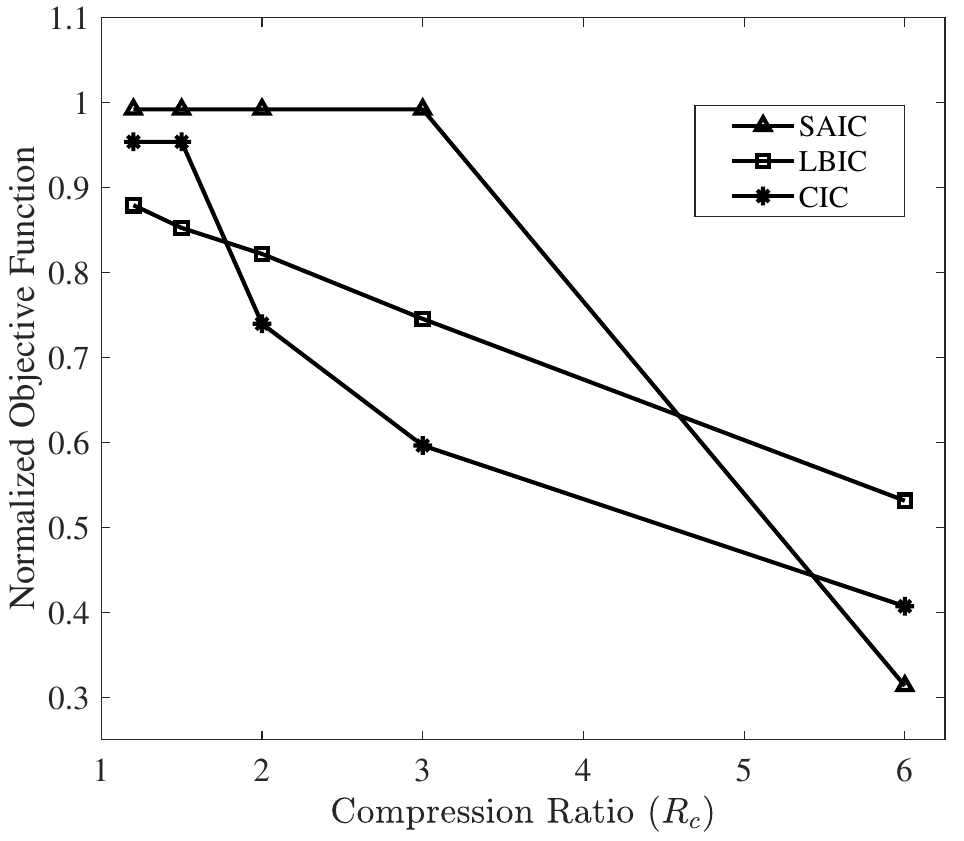} 
      \vspace{-2mm}
  \caption{A performance comparison between several multi-agent communication and control schemes under different rates of data compression. All experiments are performed where $N=8$ and $\omega^T=21$. The number of training episodes/iterations for any scheme at any given bit-budget $R$ has been $K=200K$.}
  \label{fig: variable compression ratios}
  \vspace{-0.2cm}
\end{figure}

 As demonstrated through a range of numerical experiments, the weakness of conventional schemes for compression of agents' observations is that they may lose/keep information regardless of how useful they can be towards achieving the optimal objective function. In contrast, the task-based compression schemes SAIC and LBIC, for communication bit-budgets (very) lower than the entropy of the observation process, manage to compress the observation information not to minimize the distortion but to maximize the achievable value of the objective function. \textcolor{Mycolor3}{Even though the numerical example provided in section IV, evaluates the performance of SAIC in a problem with a very low communication bit-budget, our theoretical results are applicable in scenarios with higher communication rates, as long as the processing unit that is deployed to solve the problem (\ref{centralized problem - general problem}) is of sufficient computational resources to solve the problem in the desire time window.}

\vspace{-0.3cm}
\section{Conclusion} \label{sec: conclusion}
\vspace{-2mm}
We have investigated the distributed joint design of communications and control for an MAS under bit-budgeted communications with the ultimate goal of maximizing the system's expected return. 
Since we consider a limited bit-budget for the multi-agent communication channels, task-based compression of agents' observations has been of the essence. Our proposed scheme, SAIC, which derives and solves the TODC problem can be differentiated from the conventional data quantization algorithms in the sense that it does not aim at achieving minimum possible distortion between the original signal and its reconstructed version - given a bit-budget for inter-agent communications. In contrast, SAIC aims at achieving the minimum possible distortion between the (learned) usefulness/value of the original observation signal and the learned usefulness/value of the the reconstructed observation signal - given a bit-budget for inter-agent communications.
We have demonstrated the outstanding performance of SAIC compared with the conventional data compression algorithms, by up to a remarkable 40\% improvement in the achieved objective function, when being imposed with tight constraints on the communication bit-budget.

To maximize the system's expected return, we could show analytically, how one can disentangle the TODC from the control problem - given the possibility of a centralized training phase. Our analytical studies confirm that despite the separation of the TODC and control problems, we can ensure very little compromise on the MAS's average return - compared with the jointly optimal control and quantization.
\textcolor{Mycolor3}{Since the computational complexity of Q-learning in the centralized training phase is order of $|\Omega^n \times \mathcal{M}^n|$ time complexity \cite{azar2011speedy}, the addition of one single agent will multiply the complexity of the centralized training by $|\Omega \times \mathcal{M}|$. Thus, the complexity of the centralized training phase becomes a hurdle for the scalability of SAIC to a high number of agents. \textcolor{Mycolor3}{ Accordingly, improving the scalability of the algorithm as well as extending the results for non-symmetric variable bit-budgets can be useful avenues to improve the applicability of the proposed schemes.}}

\bibliographystyle{IEEEtran}
{\small
\bibliography{Bibfile}}

\begin{thebibliography}{10}
\providecommand{\url}[1]{#1}
\csname url@samestyle\endcsname
\providecommand{\newblock}{\relax}
\providecommand{\bibinfo}[2]{#2}
\providecommand{\BIBentrySTDinterwordspacing}{\spaceskip=0pt\relax}
\providecommand{\BIBentryALTinterwordstretchfactor}{4}
\providecommand{\BIBentryALTinterwordspacing}{\spaceskip=\fontdimen2\font plus
\BIBentryALTinterwordstretchfactor\fontdimen3\font minus
  \fontdimen4\font\relax}
\providecommand{\BIBforeignlanguage}[2]{{%
\expandafter\ifx\csname l@#1\endcsname\relax
\typeout{** WARNING: IEEEtran.bst: No hyphenation pattern has been}%
\typeout{** loaded for the language `#1'. Using the pattern for}%
\typeout{** the default language instead.}%
\else
\language=\csname l@#1\endcsname
\fi
#2}}
\providecommand{\BIBdecl}{\relax}
\BIBdecl

\bibitem{mostaani2021task}
A.~Mostaani, T.~X. Vu, S.~K. Sharma, Q.~Liao, and S.~Chatzinotas,
  ``Task-oriented communication system design in cyber-physical systems: A
  survey on theory and applications,'' \emph{arXiv preprint arXiv:2102.07166},
  2021.

\bibitem{gunduz2022beyond}
D.~Gunduz, Z.~Qin, I.~E. Aguerri, H.~S. Dhillon, Z.~Yang, A.~Yener,
  K.~Kit~Wong, and C.-B. Chae, ``Beyond transmitting bits: Context, semantics,
  and task-oriented communications,'' \emph{arXiv e-prints}, pp. arXiv--2207,
  2022.

\bibitem{strinati20216g}
E.~C. Strinati and S.~Barbarossa, ``6g networks: Beyond shannon towards
  semantic and goal-oriented communications,'' \emph{Computer Networks}, vol.
  190, p. 107930, 2021.

\bibitem{1056251}
H.~Witsenhausen, ``Indirect rate distortion problems,'' \emph{IEEE Transactions
  on Information Theory}, vol.~26, no.~5, pp. 518--521, 1980.

\bibitem{ioannou1988theory}
P.~Ioannou and J.~Sun, ``Theory and design of robust direct and indirect
  adaptive-control schemes,'' \emph{International Journal of Control}, vol.~47,
  no.~3, pp. 775--813, 1988.

\bibitem{barel2017come}
A.~Barel, R.~Manor, and A.~M. Bruckstein, ``Come together: Multi-agent
  geometric consensus,'' \emph{arXiv preprint arXiv:1902.01455}, 2017.

\bibitem{xie2022task}
H.~Xie, Z.~Qin, X.~Tao, and K.~B. Letaief, ``Task-oriented multi-user semantic
  communications,'' \emph{IEEE Journal on Selected Areas in Communications},
  vol.~40, no.~9, pp. 2584--2597, 2022.

\bibitem{shlezinger2020task}
N.~Shlezinger and Y.~C. Eldar, ``Task-based quantization with application to
  mimo receivers,'' \emph{arXiv preprint arXiv:2002.04290}, 2020.

\bibitem{palattella2018enabling}
M.~R. Palattella and N.~Accettura, ``Enabling internet of everything
  everywhere: Lpwan with satellite backhaul,'' in \emph{2018 Global Information
  Infrastructure and Networking Symposium (GIIS)}.\hskip 1em plus 0.5em minus
  0.4em\relax IEEE, 2018, pp. 1--5.

\bibitem{chaari2019heterogeneous}
L.~Chaari, M.~Fourati, and J.~Rezgui, ``Heterogeneous lorawan \& leo satellites
  networks concepts, architectures and future directions,'' in \emph{2019
  Global Information Infrastructure and Networking Symposium (GIIS)}.\hskip 1em
  plus 0.5em minus 0.4em\relax IEEE, 2019, pp. 1--6.

\bibitem{azari2022evolution}
M.~M. Azari, S.~Solanki, S.~Chatzinotas, O.~Kodheli, H.~Sallouha, A.~Colpaert,
  J.~F.~M. Montoya, S.~Pollin, A.~Haqiqatnejad, A.~Mostaani \emph{et~al.},
  ``Evolution of non-terrestrial networks from 5g to 6g: A survey,'' \emph{IEEE
  Communications Surveys \& Tutorials}, 2022.

\bibitem{nair2003exponential}
G.~N. Nair and R.~J. Evans, ``Exponential stabilisability of finite-dimensional
  linear systems with limited data rates,'' \emph{Automatica}, vol.~39, no.~4,
  pp. 585--593, 2003.

\bibitem{nair2004stabilizability}
------, ``Stabilizability of stochastic linear systems with finite feedback
  data rates,'' \emph{SIAM Journal on Control and Optimization}, vol.~43,
  no.~2, pp. 413--436, 2004.

\bibitem{lauer2000distributedQ}
M.~Lauer and M.~A. Riedmiller, ``An algorithm for distributed reinforcement
  learning in cooperative multi-agent systems,'' in \emph{Proc. Conference on
  Machine Learning}.\hskip 1em plus 0.5em minus 0.4em\relax Morgan Kaufmann
  Publishers Inc., 2000.

\bibitem{kostina2019rate}
V.~Kostina and B.~Hassibi, ``Rate-cost tradeoffs in control,'' \emph{IEEE
  Transactions on Automatic Control}, vol.~64, no.~11, pp. 4525--4540, 2019.

\bibitem{tung2021effective}
T.-Y. Tung, S.~Kobus, J.~R. Pujol, and D.~Gunduz, ``Effective communications: A
  joint learning and communication framework for multi-agent reinforcement
  learning over noisy channels,'' \emph{arXiv preprint arXiv:2101.10369}, 2021.

\bibitem{arimoto1972algorithm}
S.~Arimoto, ``An algorithm for computing the capacity of arbitrary discrete
  memoryless channels,'' \emph{IEEE Transactions on Information Theory},
  vol.~18, no.~1, pp. 14--20, 1972.

\bibitem{shlezinger2021deep}
N.~Shlezinger and Y.~C. Eldar, ``Deep task-based quantization,''
  \emph{Entropy}, vol.~23, no.~1, p. 104, 2021.

\bibitem{pynadath2002communicative}
D.~V. Pynadath and M.~Tambe, ``The communicative multiagent team decision
  problem: Analyzing teamwork theories and models,'' \emph{Journal of
  Artificial Intelligence Research}, vol.~16, pp. 389--423, Jun. 2002.

\bibitem{lee2020optimization}
D.~Lee, N.~He, P.~Kamalaruban, and V.~Cevher, ``Optimization for reinforcement
  learning: From a single agent to cooperative agents,'' \emph{IEEE Signal
  Processing Magazine}, vol.~37, no.~3, pp. 123--135, 2020.

\bibitem{ZhangCoordinating}
C.~Zhang and V.~Lesser, ``Coordinating multi-agent reinforcement learning with
  limited communication,'' in \emph{Conference on Autonomous Agents and
  Multi-agent Systems}, St. Paul, Minnesota, May 2013, pp. 1101--1108.

\bibitem{fischer2004hierarchical}
F.~Fischer, M.~Rovatsos, and G.~Weiss, ``Hierarchical reinforcement learning in
  communication-mediated multiagent coordination,'' in \emph{Proc. IEEE Joint
  Conference on Autonomous Agents and Multiagent Systems, 2004. AAMAS 2004.},
  New York, Jul. 2004, pp. 1334--1335.

\bibitem{kasai2008learning}
T.~Kasai, H.~Tenmoto, and A.~Kamiya, ``Learning of communication codes in
  multi-agent reinforcement learning problem,'' in \emph{Soft Computing in
  Industrial Applications, 2008. SMCia'08. IEEE Conf. on}.\hskip 1em plus 0.5em
  minus 0.4em\relax IEEE, 2008, pp. 1--6.

\bibitem{wu2011online}
F.~Wu, S.~Zilberstein, and X.~Chen, ``Online planning for multi-agent systems
  with bounded communication,'' \emph{Artificial Intelligence}, vol. 175,
  no.~2, pp. 487--511, Feb. 2011.

\bibitem{Amiri2018CEASE}
A.~{Amini}, A.~{Asif}, and A.~{Mohammadi}, ``Cease: A collaborative
  event-triggered average-consensus sampled-data framework with performance
  guarantees for multi-agent systems,'' \emph{IEEE Transactions on Signal
  Processing}, vol.~66, no.~23, pp. 6096--6109, 2018.

\bibitem{FoersterLearning}
J.~Foerster, Y.~Assael, N.~de~Freitas, and S.~Whiteson, ``Learning to
  communicate with deep multi-agent reinforcement learning,'' in \emph{Proc.
  Advances in Neural Information Processing Systems}, Barcelona, 2016.

\bibitem{mostaani2019Learning}
A.~{Mostaani}, O.~{Simeone}, S.~{Chatzinotas}, and B.~{Ottersten},
  ``Learning-based physical layer communications for multiagent
  collaboration,'' in \emph{2019 IEEE Intl. Symp. on Personal, Indoor and
  Mobile Radio Communications}, Sep. 2019.

\bibitem{mostaani2020state}
A.~Mostaani, T.~X. Vu, S.~Chatzinotas, and B.~Ottersten, ``State aggregation
  for multiagent communication over rate-limited channels,'' in \emph{GLOBECOM
  2020-2020 IEEE Global Communications Conference}.\hskip 1em plus 0.5em minus
  0.4em\relax IEEE, 2020, pp. 1--7.

\bibitem{kim2019schedule}
D.~Kim, S.~Moon, D.~Hostallero, W.~J. Kang, T.~Lee, K.~Son, and Y.~Yi,
  ``Learning to schedule communication in multi-agent reinforcement learning,''
  in \emph{Intl. Conf. on Learning Representations}, 2019.

\bibitem{lowe2019pitfalls}
R.~Lowe, J.~Foerster, Y.-L. Boureau, J.~Pineau, and Y.~Dauphin, ``On the
  pitfalls of measuring emergent communication,'' in \emph{Intl. Conf. on
  Autonomous Agents and MultiAgent Systems}, 2019.

\bibitem{Bertsekas1989AdaptiveAg}
D.~P. {Bertsekas} and D.~A. {Castanon}, ``Adaptive aggregation methods for
  infinite horizon dynamic programming,'' \emph{IEEE Transactions on Automatic
  Control}, vol.~34, no.~6, pp. 589--598, June 1989.

\bibitem{bertsekas2018feature}
D.~P. Bertsekas, ``Feature-based aggregation and deep reinforcement learning: A
  survey and some new implementations,'' \emph{IEEE/CAA Journal of Automatica
  Sinica}, vol.~6, no.~1, pp. 1--31, 2018.

\bibitem{abel2016near}
D.~Abel, D.~Hershkowitz, and M.~Littman, ``Near optimal behavior via
  approximate state abstraction,'' in \emph{International Conference on Machine
  Learning}.\hskip 1em plus 0.5em minus 0.4em\relax PMLR, 2016, pp. 2915--2923.

\bibitem{rubino1989weak}
G.~Rubino, ``On weak lumpability in markov chains,'' \emph{Journal of Applied
  Probability}, vol.~26, no.~3, pp. 446--457, 1989.

\bibitem{bertsekas2019biased}
D.~Bertsekas, ``Biased aggregation, rollout, and enhanced policy improvement
  for reinforcement learning,'' \emph{arXiv preprint arXiv:1910.02426}, 2019.

\bibitem{zou2018decision}
H.~Zou, C.~Zhang, S.~Lasaulce, and et~al, ``Decision-oriented communications:
  Application to energy-efficient resource allocation,'' in \emph{Intl. Conf.
  on Wireless Networks and Mobile Communications}.\hskip 1em plus 0.5em minus
  0.4em\relax IEEE, 2018.

\bibitem{mao2019learning}
H.~Mao, Z.~Zhang, Z.~Xiao, Z.~Gong, and Y.~Ni, ``Learning agent communication
  under limited bandwidth by message pruning,'' \emph{arXiv preprint
  arXiv:1912.05304}, 2019.

\bibitem{sukhbaatar2016learning}
S.~Sukhbaatar, R.~Fergus \emph{et~al.}, ``Learning multiagent communication
  with backpropagation,'' in \emph{Proc. Advances in Neural Information
  Processing Systems}, Barcelona, 2016, pp. 2244--2252.

\bibitem{stavrou2022rate}
P.~A. Stavrou and M.~Kountouris, ``A rate distortion approach to goal-oriented
  communication,'' 2022.

\bibitem{oliehoek2008optimal}
F.~A. Oliehoek, M.~T. Spaan, and N.~Vlassis, ``Optimal and approximate q-value
  functions for decentralized pomdps,'' \emph{Journal of Artificial
  Intelligence Research}, vol.~32, pp. 289--353, 2008.

\bibitem{monahan1982state}
G.~E. Monahan, ``State of the art—a survey of partially observable markov
  decision processes: theory, models, and algorithms,'' \emph{Management
  science}, vol.~28, no.~1, pp. 1--16, 1982.

\bibitem{zilber2001communication}
\BIBentryALTinterwordspacing
P.~Xuan, V.~Lesser, and S.~Zilberstein, ``Communication decisions in
  multi-agent cooperation: Model and experiments,'' in \emph{Proceedings of the
  Fifth International Conference on Autonomous Agents}, ser. AGENTS
  ’01.\hskip 1em plus 0.5em minus 0.4em\relax New York, NY, USA: Association
  for Computing Machinery, 2001, p. 616–623. [Online]. Available:
  \url{https://doi.org/10.1145/375735.376469}
\BIBentrySTDinterwordspacing

\bibitem{oliehoek2007dec}
F.~A. Oliehoek, M.~T. Spaan, N.~Vlassis \emph{et~al.}, ``{DEC-PoMDPs} with
  delayed communication,'' in \emph{Proc. Multi-agent Sequential
  Decision-Making in Uncertain Domains}, Honolulu, Hawaii, May 2007.

\bibitem{larrousse2018coordination}
B.~Larrousse, S.~Lasaulce, and M.~R. Bloch, ``Coordination in distributed
  networks via coded actions with application to power control,'' \emph{IEEE
  Trans. on Information Theory}, vol.~64, no.~5, pp. 3633--3654, 2018.

\bibitem{Suttonintroduction}
R.~S. Sutton and A.~G. Barto, \emph{Introduction to reinforcement learning},
  2nd~ed.\hskip 1em plus 0.5em minus 0.4em\relax MIT Press, Nov. 2017, vol.
  135.

\bibitem{rizk2018decision}
Y.~Rizk, M.~Awad, and E.~W. Tunstel, ``Decision making in multiagent systems: A
  survey,'' \emph{IEEE Transactions on Cognitive and Developmental Systems},
  vol.~10, no.~3, pp. 514--529, 2018.

\bibitem{boutilier1999multiagent}
C.~Boutilier, ``Multiagent systems: Challenges and opportunities for
  decision-theoretic planning,'' \emph{AI magazine}, vol.~20, no.~4, pp.
  35--35, 1999.

\bibitem{jaakkola1994convergence}
T.~Jaakkola, M.~I. Jordan, and S.~P. Singh, ``Convergence of stochastic
  iterative dynamic programming algorithms,'' in \emph{Advances in neural
  information processing systems}, 1994, pp. 703--710.

\bibitem{heylighen2016stigmergy}
F.~Heylighen, ``Stigmergy as a universal coordination mechanism i: Definition
  and components,'' \emph{Cognitive Systems Research}, vol.~38, pp. 4--13,
  2016.

\bibitem{oliehoek2016concise}
F.~A. Oliehoek, C.~Amato \emph{et~al.}, \emph{A concise introduction to
  decentralized POMDPs}.\hskip 1em plus 0.5em minus 0.4em\relax Springer, 2016,
  vol.~1.

\bibitem{yuksel2013jointly}
S.~Y{\"u}ksel, ``Jointly optimal lqg quantization and control policies for
  multi-dimensional systems,'' \emph{IEEE Transactions on Automatic Control},
  vol.~59, no.~6, pp. 1612--1617, 2013.

\bibitem{lloyd1982least}
S.~Lloyd, ``Least squares quantization in pcm,'' \emph{IEEE transactions on
  information theory}, vol.~28, no.~2, pp. 129--137, 1982.

\bibitem{linde1980algorithm}
Y.~Linde, A.~Buzo, and R.~Gray, ``An algorithm for vector quantizer design,''
  \emph{IEEE Transactions on communications}, vol.~28, no.~1, pp. 84--95, 1980.

\bibitem{FoersterCounter}
J.~N. Foerster, G.~Farquhar, T.~Afouras, N.~Nardelli, and S.~Whiteson,
  ``Counterfactual multi-agent policy gradients,'' in \emph{Thirty-Second AAAI
  Conference on Artificial Intelligence}, 2018.

\bibitem{amato2009incremental}
C.~Amato, J.~S. Dibangoye, and S.~Zilberstein, ``Incremental policy generation
  for finite-horizon dec-pomdps,'' in \emph{Nineteenth International Conference
  on Automated Planning and Scheduling}, 2009.

\bibitem{azar2011speedy}
M.~G. Azar, R.~Munos, M.~Ghavamzadaeh, and H.~J. Kappen, ``Speedy q-learning,''
  2011.

\end{thebibliography}
\vspace{-4mm}
\appendices
\vspace{-2mm}

\vspace{-0mm}

\section{Proof of Theorem \ref{The main theorem}} \label{appendix: proof of the main theorem}
To prove this theorem we first introduce a definition in subsection \ref{subsect: def: TBIC}, together with two lemmas and their proofs in subsections \ref{subsect: lem: Adam's law on the value function} and \ref{subsect: proof: lem: optimal value of the aggregated state}. Lastly, we complete the proof of Theorem \ref{The main theorem}, in subsection \ref{subsect: proof: The main theorem} leveraging the above-mentioned.

\subsection{Task-based information compression problem: a definition}\label{subsect: def: TBIC}
\textcolor{Mycolor3}{
\begin{definition}\label{definition: TBIC}[Task-based information compression (TBIC) problem] 
Let the higher order function $\Pi^{m^*}$ be a map from the vector space $\mathcal{K}^c$ of all possible joint communication policies $\pi^c = \langle \pi^c_1(\cdot), ..., \pi^c_n(\cdot)\rangle$ to the vector space $\mathcal{K}^m$ of optimal corresponding joint control policies $\pi^m = \langle \pi^{m^*}_1(\cdot), ...,  \pi^{m^*}_n(\cdot) \rangle$. Upon the availability of $\Pi^{m^*}$, by plugging it into the problem (\ref{decentralized problem - State Aggregation}), we will have a new problem 
{\small
\begin{align}\label{decentralized problem - communications}
& \underset{\pi^c_i}{\text{max }} 
& & \mathbb{E}_{p_{\Pi^{m^*}, \pi^c}
\big(\{\rtr (t)\}_{t=1}^{t=M}\big)}                  \Big{\{}
   \bg(1) 
\Big{\}},
  \; \; \; \; i \in \mathcal{N} \; \notag\\
& \text{s.t.} & &  
     \textcolor{Mycolor3}{log_2|\mathcal{C}| \leq R},
\end{align}}
where we maximize the system's return only with respect to the joint communication policies $\pi^c$. The joint optimal control policies $\langle \pi^{m^*}_1(\cdot), ..., \pi^{m^*}_n(\cdot) \rangle$ are automatically computed by the mapping $\Pi^{m^*}\big( \pi^c_1(\cdot), ..., \pi^c_n(\cdot) \big)$. The problem is called here as the TBIC problem.
\end{definition}
}

\subsection{Reformulating the objective function: a lemma} \label{subsect: lem: Adam's law on the value function}

\vspace{-2mm}\begin{lemma}\label{lem: Adam's law on the value function}
  The objective function of the decentralized problem (\ref{decentralized problem - State Aggregation}) can be expressed as
{\small
\begin{align} \label{g(t) and value function - State Aggregation}
  & \mathbb{E}_{p_{\pi^m,\pi^c}(\{\rtr (t)\}_{t=t'}^{t=M})}\big{\{}  \bg(t')  \big{\}} = \notag \\
  & \mathbb{E}_{p_{\pi^m,\pi^c}(h_i(\bs(t')))} \Big{\{}
             \!\mathbb{E}_{p_{\pi^m\!,\pi^c}(\{\rtr (t)\}_{t=2}^{t=M} | h_i(\bs(t')))}\!  \big{\{}\!      \bg(t') |  h_i(\bs(t'))
                         \!\big{\}}
                    \!    \Big{\}}\! =\notag\\
&                                             \mathbb{E}_{p_{\pi^m,\pi^c}( h_i(\bs(t')))} \Big{\{}
                        V_{\pi^m,\pi^c}\big( h_i(\bs(t'))\big)
                                   \Big{\}},
\end{align}}
for all $i \in \mathcal{N}$, where $V_{\pi^m,\pi^c}\big(  h_i(\bs(t')) \big)$
is the solution to the Bellman equation corresponding to the joint control and communication policies $\pi^m,\pi^c$.
\end{lemma}

\begin{proof}
    Considering the definition of the value function, given in (\ref{eq: define value function}), the proof is immediately concluded when applying Adam's law on the expectation of the value function
\textcolor{Mycolor3}{ {\small
\begin{align} \label{eq: define value function}
    & V_{\pi^m,\pi^c}\big(h_i\big( \rs(t') \big)  \big) =
    \mathbb{E}_{p_{\pi^m,\pi^c}(\{\rtr (t)\}_{t=t'+1}^{t=M})}\big{\{}  \bg(t') | h_i\big( \rs(t') \big) \big{\}}.
\end{align}}
}
\vspace{-2mm}
\end{proof}

\subsection{Value of the perceived state of environment: a lemma} \label{subsect: proof: lem: optimal value of the aggregated state}

 \begin{lemma} \label{lem: optimal value of the aggregated state}
\textcolor{Mycolor3}{Using the knowledge of the solution $\pi^*(\cdot)$ to the centralized problem, we can find the optimal value of a perceived state $V^{*}\big( h_i\big(\rs(t)\big) \big)$ in terms of the value of the underlying state $V^{*}\big( \rs(t) \big)$ by
{\small
\begin{equation}\label{lemma 1 - equation}
    V^{*}\!\big(h_i\big(\rs(t)\big)\big) \!\! = \!\!   \sum_{\ro_1(t) \in \Omega} ... \sum_{\ro_{n}(t) \in \Omega}
    \!\!\! V^{*}\big( \rs(t) \big) \,\, p\big(\ro_{-i}(t)| \rc_{-i}(t)\big).
\end{equation}}}
\end{lemma}

\vspace{-0mm}
\begin{proof}
\vspace{-2mm}
\begin{align}
    & V^{*}\big(  h_i\big( \rs(t') \big) )\big) = \\ 
    & \mathbb{E}_{p\big( \{tr\}^{M}_{t^{'}} | h_i\big( \rs(t') \big)  \big)}  \Big{\{}
    {\sum}_{t=t^{'}}^M \gamma^{t-1} r\big(\bs(t),\bm(t)\big)
     \big{|} h_i\big( \rs(t') \big) 
                \Big{\}} = \notag \\
    & \mathbb{E}_{p\big( \{\rtr\}^{M}_{t^{'}} | h_i\big( \rs(t') \big)  \big)}
                \Big{\{}
     \bg(t^{'})
     | h_i\big( \rs(t') \big) 
                \Big{\}} = \\
    &            {\sum}_{\{\rtr\}_{t^{'}}^{M}} \bg(t^{'}) 
    p\Big( \{ \rtr \}_{t^{'}}^{M} | h_i\big( \rs(t') \big)  \Big), \notag
\end{align}
\vspace{-0mm}
where the conditional probability $p\Big( \{ \rtr \}_{t^{'}}^{M} | h_i\big( \rs(t') \big)  \Big)$ can be extended following the law of total probabilities 
\begin{align} \label{eq: total prob for approximated value of the perceived state}
   & V^{*}\big(h_i\big( \rs(t') \big) \big) = 
        {\sum}_{\{\rtr\}_{t^{'}}^{M}} \bg(t^{'}) 
    \Bigg[ \sum_{\ro_1(t) \in \Omega } ... \sum_{\ro_n(t) \in \Omega } \notag \\
    &  p\Big( \{ \rtr \}_{t^{'}}^{M} | \ro_i(t^{'}), \ro_{-i}(t^{'}), {\rc}_{-i}(t^{'}) \Big) p\big( \ro_{-i}(t') |  {\rc}_{-i}(t') \big)
    \Bigg],
\end{align}
\vspace{-0mm}
where $\ro_{-i}(t')$ is the observation vector of all agents $i\in \mathcal{N}_{-i}$. In eq. (\ref{eq: total prob for approximated value of the perceived state}) $\ro_i(t^{'}),\ro_{-i}(t^{'})$ are sufficient statistics and can be replaced by $\rs(t')$ and the second summation can be shifted to have
\begin{align}
    & V^{*}\big(h_i\big( \rs(t') \big)\big) = \notag \\
    & \sum_{\ro_1(t) \in \Omega } \!\! ... \!\! \sum_{\ro_n(t) \in \Omega }\sum_{\{\rtr\}_{t^{'}}^{M}} \!\! \bg(t^{'}) 
     p\Big( \{ \rtr \}_{t^{'}}^{M} | \rs(t')) \Big)
     p\big( \ro_{-i}(t) |  {\rc}_{-i}(t) \big),
\end{align}
\vspace{-0mm}
where $\sum_{\{\rtr\}_{t^{'}}^{M}} \bg(t^{'}) 
     p\Big( \{ \rtr \}_{t^{'}}^{M} | \rs(t')) \Big)$ can be replaced with $V^*\big( \rs(t) \big)$, concluding the proof.
\end{proof}
\vspace{-0mm}

\subsection{Proof of Theorem \ref{The main theorem}} \label{subsect: proof: The main theorem}
   \begin{proof}
     \textcolor{Mycolor3}{    Further to the result of lemma \ref{lem: Adam's law on the value function} and eq. (\ref{g(t) and value function - State Aggregation}), the original problem (\ref{decentralized problem - State Aggregation}) can be expressed by
        {\small\begin{equation} \label{simplified decentralized problem - equation}
            \begin{aligned}
                & \underset{\pi^m_i(\cdot),\pi^c_i(\cdot)}{\text{max }} 
                & & \mathbb{E}_{p_{\pi^m,\pi^c}
                \big( h_i \big( \rs(1) \big) \big)}  \Big{\{}
                        V_{\pi^m,\pi^c}\big( h_i \big( \rs(1) \big) \big)
                                        \Big{\}}, \\
                & \text{s.t.} & &
                \textcolor{Mycolor3}{log_2|\mathcal{C}| \leq R},
            \end{aligned}
        \end{equation}}
        for $i \in \mathcal{N}$.} \textcolor{Mycolor3}{ \textcolor{Mycolor3}{Now by following definition \ref{definition: TBIC} and plugging $\Pi^{m^*}(\cdot)$ into the problem (\ref{simplified decentralized problem - equation}) we obtain the TBIC problem}}
        \vspace{-2mm}
\textcolor{Mycolor3}{{\small\begin{align}\label{decentralized problem 1st version - State Aggregation}
& \underset{\pi^c_i(\cdot)}{\text{max }} 
& &
    \mathbb{E}_{p_{\Pi^{m^*}(\pi^c),\pi^c} %
\big( h_i \big( \rs(1) \big) \big)} \Big{\{}
                        V_{{\Pi^{m^*}}(\pi^c),\pi^c}\big( h_i \big( \bs(1) \big) \big)
                                   \Big{\}}, \notag \\
& \text{s.t.} & &
     \textcolor{Mycolor3}{log_2|\mathcal{C}| \leq R}, \,\,\, i \in \mathcal{N}.
\end{align}}
\textcolor{Mycolor3}{ We continue by following lemma \ref{lem: optimal value of the aggregated state}, to be able to substitute $V_{{\Pi^{m^*}}(\pi^c),\pi^c}\big( h_i \big( \rs(1) \big) \big)$ with its approximator $V^{*}\big( h_i \big( \rs(1) \big) \big)$. This brings us to the approximated TBIC problem}
{\small\begin{align}\label{decentralized problem 2nd version - State Aggregation}
& \underset{\pi^c_i(\cdot)}{\text{max }} 
& &
    \mathbb{E}_{p_{\pi^*,\pi^c}
\big( h_i \big( \rs(1) \big) \big)} \Big{\{}
                        V^{*}\big( h_i \big( \bs(1) \big) \big)
                                   \Big{\}}
    \; \; i \in \mathcal{N} \notag \\
& \text{s.t.} & &
     \textcolor{Mycolor3}{log_2|\mathcal{C}| \leq R}.
\end{align}}
Note that the optimizers of the problem (\ref{decentralized problem 2nd version - State Aggregation}) and (\ref{value function rate-distortion - State Aggregation}) are identical since the additional term $\mathbb{E} \big{\{}V^{*}\big( \bs(t) \big) \big{\}}$ is independent from the communication policy $\pi^c_i(\cdot)$. Furthermore, the problem (\ref{value function rate-distortion - State Aggregation}) is now expressed as a form of data quantization problem with mean absolute difference of the value functions $V^{*}\big(\bs(t)\big)$ and $V^{*}\big( h_i \big( \bs(1) \big) \big)$ as the measure of distortion. This interpretation of problem (\ref{value function rate-distortion - State Aggregation}) can be better understood later by seeing the eq. (\ref{eq: generalized data quantization problem}).} 
\textcolor{Mycolor3}{{\small\begin{align}\label{value function rate-distortion - State Aggregation}
&  \underset{\pi^c_i(\cdot)}{\text{min}}
& & \!\!\mathbb{E}_{p_{\pi^m,\pi^c}
\big( h_i\big( \rs(1) \big) \big)}
\Big{\{}\!
V^{*}\big( \bs(1) \big) - V^{*}\big( h_i\big( \bs(1) \big) \big)          \!\!  \Big{\}} \notag \\
& \text{s.t.} 
& &  \!\! \textcolor{Mycolor3}{log_2|\mathcal{C}| \leq R},
\end{align}}}
and since $V^{*}\big( \bs(1) \big)$ is always larger than $V^{*}\big( h_i\big( \bs(1) \big) \big) $, the problem above can also be written as

\textcolor{Mycolor3}{{\small\begin{align}\label{eq: generalized data quantization problem}
&  \underset{\pi^c_i(\cdot)}{\text{min}}
& & \!\!\mathbb{E}_{p_{\pi^m,\pi^c}
\big( h_i\big( \rs(1) \big) \big)}
\Big{\{}
 \big{|} V^{*}\big( \bs(1) \big) - V^{*}\big( h_i\big( \bs(1) \big) \big)  \big{|}          \Big{\}} \notag \\
& \text{s.t.} 
& &  \!\! \textcolor{Mycolor3}{log_2|\mathcal{C}| \leq R},
\end{align}}}
concluding the proof of Theorem \ref{The main theorem}.
\end{proof}

\section{Proof of Lemma \ref{lemma: quantization to k-median}}\label{appendix: proof: lemma: quantization to k-median} 
\begin{proof}
\textcolor{Mycolor3}{The term {\small
$\mathbb{E}_{p_{\pi^m,\pi^c}
\big( h_i\big( \rs(1) \big) \big)}
\Big{\{}\!
V^{*}\big( \bs(1) \big) - V^{*}\big( h_i\big( \bs(1) \big) \big)          \!\!  \Big{\}}$}
can be estimated by computing it over the empirical distribution of $\bs(1)$. Note that the empirical joint distribution of $h_i \big( \bs(1) \big)$ can be obtained by following the communication policy $\pi^c_i(\cdot)$ on the empirical distribution of $\bs(1)$. Therefore, the problem (\ref{value function rate-distortion - State Aggregation}) can be rewritten as
{\small\begin{align}\label{Emperical expectation - State Aggregation}
&  \underset{\pi^c_i(\cdot)}{\text{min}}
& & \sum _{\ro_i(1) \in \Omega} ... \sum_{\ro_n(1) \in \Omega} \Big{|} V^{*}\big(\rs(t)\big) - V^{*}\big( h_i\big( \rs(t)  \big)   \big)\Big{|}, \,\, \forall i \in \mathcal{N} \notag \\                                                                 
& \text{s.t.} & &  \textcolor{Mycolor3}{log_2|\mathcal{C}| \leq R}.
\end{align}}}
\textcolor{Mycolor3}{Quantization levels are disjoint sets $\mathcal{P}_{i,k} \subset \Omega$, where their union $\cup_{k=1}^{2^R}\mathcal{P}_{i,k}$ will cover the entire $\Omega$. Each quantization level is represented by only one communication message $\rc_j(t)=\rc_k \in \mathcal{C}$. Further to lemma \ref{lem: optimal value of the aggregated state}, the value of $V^{*}\big( h_i\big( \bs(t)  \big) \big)$ can be computed by empirical mean (\ref{lemma 1 - equation}).}

\textcolor{Mycolor3}{The quantization problem (\ref{Emperical expectation - State Aggregation}) becomes a k-median clustering problem
\vspace{-2mm}
{\small\begin{equation}\label{k-median - State Aggregation}
\begin{aligned}
&  \underset{\mathcal{P}_i}{\text{min}}
& & \underset{j \in \mathcal{N}_{-i}}{\sum_{\ro_j(t) \in \Omega}} \;\;\sum_{k=1}^{2^R} \;\; \sum_{\ro_i(t) \in \mathcal{P}_{i,k}} \Big{|} V^{*}\big(\ro_i(t), \ro_j(t)\big) - \mu_k \Big{|},     
\end{aligned}
\end{equation}}
where $\mathcal{P}_i = \{\mathcal{P}_{i,1}, ..., \mathcal{P}_{i,2^R}\} $ is a partition of $\Omega$, and the first summation $ \underset{j \in \mathcal{N}_{-i}}{\sum_{\ro_j(t) \in \Omega}}$ is a concatenation of $n-1$ summations each one acting over $\ro_j(t) \in \Omega$ where $j \in \mathcal{N}_{-i}$.}

\textcolor{Mycolor3}{By taking the mean of $V^{*}\big( \rs(t) \big) $ over the empirical distribution of $\ro_j(t), \,\, \forall j \in \mathcal{N}_{i}$, we can also marginalize out $\ro_j(t), \,\, \forall j \in \mathcal{N}_{i}$. Again, it does not change the solution of the problem and we will have
{\small\begin{equation}\label{k-median main - State Aggregation}
\begin{aligned}
&  \underset{\mathcal{P}_i}{\text{min}}
& & {\sum}_{k=1}^{2^R} {\sum}_{\ro_i(t) \in \mathcal{P}_{i,k}} \Big{|} V^{*}\big(\ro_i(t)\big) - \mu^{'}_k \Big{|},  
\end{aligned}
\end{equation}}
in which $\mu^{'}_k = \sum_{\ro_j(t) \in \mathcal{P}_{i,k}} \mu_k$ will approximate $V^{*}\big( \rc_i(t) \big)$.}
\end{proof}

\textcolor{Mycolor3}{ To gain more insight about the meaning of this task-based information compression, it is useful to take a look at the conventional quantization problem which is adapted to our problem setting in eq. (\ref{conventional State Aggregation}), where ${\bc}_j = \pi^c_j\big( \bo_j(1)\big)$. In fact, the compression scheme applied in the CIC, explained in subsection (\ref{conventional communication - subsection}), is obtained by solving the following problem}
\textcolor{Mycolor3}{{\small\begin{equation}\label{conventional State Aggregation}
\begin{aligned}
  \underset{\pi^c_i(\cdot)}{\text{min}}
 {\sum}_{\ro_i(1) \in \Omega} \Big{|}  \ro_i(t) -   {\rc}_i(t)\Big{|}^2, ~~    \text{s.t.}   \,\,\, \textcolor{Mycolor3}{log_2|\mathcal{C}| \leq R},\\
\end{aligned}
\end{equation}}}
which can be solved optimally by the Lloyd's algorithm \cite{lloyd1982least}.

\section{Proof of Lemma \ref{lemma: compute optimal value}} \label{appendix: proof: lemma: compute optimal value}
\vspace{-2mm}
\begin{proof}

\textcolor{Mycolor3}{Further to the law of iterated expectations, $V^{*}\big( \ro_i(t')\big)$ can be expressed as}
\textcolor{Mycolor3}{{\small\begin{align} \label{value function iterated expectation - State aggregation}
   & V^{*}\big(\ro_i(t')\big) = \mathbb{E}_{p(\bo_{-i}(t'))} \Bigg{\{}  \mathbb{E}_{p_{\pi^{*}}(\{\rtr (t)\}_{t=t'+1}^{t=M} | \ro_i(t'), \bo_{-i}(t'))} \bigg{\{} \notag \\
   & \bg(t')
               | \bo_i(t') = \ro_i(t') , \bo_{-i}(t')
               \bigg{\}}              \Bigg{\}} = \\
   & \sum_{\ro_{-i}(t') \in \Omega^{n-1}} \!\!\!   p(\bo_{-i}(t) = \ro_{-i}(t'))      
     \mathbb{E}_{\pi^*} \bigg{\{}
    \bg(t')
               | \ro_i(t') , \ro_{-i}(t')               \bigg{\}} \notag
\end{align}}}
where the expectation of the last term is the optimal value of the state $\rs(t') = \langle \ro_i(t') , \ro_{-i}(t')   \rangle$ of the underlying MDP
\begin{align}
    V^{*}\big( \rs(t') \big) =   \mathbb{E}_{\pi^*} \bigg{\{}
    \bg(t')
               | \ro_i(t') , \ro_{-i}(t')               \bigg{\}}.
\end{align}
Following Bellman optimality equation $V^{*}\big(\rs(t')\big)$ can be obtained by centralized $Q$-learning following
\textcolor{Mycolor3}{{\small
\begin{align} \label{value of o1 and o2 - State aggregation}
        & V^{*}\!\big(\rs(t')\big) \!= \! \underset{\rm \in \mathcal{M}^n}{\text{max }} \; Q^{*}\!\big(\rs(t'),\rm(t')\big)  \\
        &   =  \mathbb{E}_{p_{\pi^{*}}(\{\rtr (t)\}_{t=t'+1}^{t=M} | \ro_i(t'),\ro_{-i}(t'))} \!\Bigg{\{}\! 
    \bg(t')
               | \ro_i(t'), \ro_{-i}(t')
              \!\! \Bigg{\}}. \notag 
\end{align}
}}
\textcolor{Mycolor3}{Using (\ref{value function iterated expectation - State aggregation}) and (\ref{value of o1 and o2 - State aggregation}) we can simply compute $V^{*}\big(\ro_i(t')\big)$ by 
{\small\begin{align} \label{value function iterated expectation simplified - State aggregation}
V^{*}\big(\ro_i(t)\big) = 
   \sum_{\ro_{-i}(t) \in \Omega^{n-1}}    \underset{\rm}{\text{max }} \; Q^{*}\big(\rs(t),\rm(t)\big)   p\big(\bo_{-i}(t) = \ro_{-i}(t)\big).  
\end{align}}}
\end{proof}

\section{Proof of Theorem \ref{theorem: error bound}} \label{appendix: proof: theorem: error bound}
\vspace{-2mm}
\begin{proof}
    Without loss of generality, we have written the proof of this theorem for a two agent scenario to improve the readability. Given the proof for the two-agent system, the extension to a multi-agent system is straightforward. According to the \cite{abel2016near}(Lemma 1), optimal state values of the aggregated MDPs (the environment as is seen by one agent during the decentralized training phase of SAIC) are in a small neighbourhood of the optimal values corresponding to the optimal solution to the original underlying MDP:
\begin{align}
    & \forall \ro_j \in \Omega \text{ and } \text{ and } \forall i \in \{1,2\}, j \neq i:  \notag  \\
    & | V^*(\ro_i, \ro_j ) - V_i^{m}(\ro_i,  {c}^{(k)}_j) | < \frac{2 \epsilon}{(1-\gamma)^2},
    \vspace{-2mm}
\end{align}
where $V_i^m(\cdot)$ is the value function corresponding to $\pi^{m,SAIC}_i(\cdot)$. The communication signal $c_j^{(k)} \in \mathcal{C}$ is agent $j$'s communicated message and at the same time is the $k$-th element of the communication space $\mathcal{C} = \{ c^{(1)}, c^{(2)}, ..., c^{|\mathcal{C}|}\}$ i.e., $c_j^{(k)} = c^{(k)}$. Following the eq. (\ref{g(t) and value function - State Aggregation}), one can write the expected return of the system under centralized scheme as :
\begin{align} \label{eq: expected return original MDP}
        & \mathbb{E}_{p_{\pi^*}(\{\rtr (t)\}_{t=t_0}^{t=M})}\big{\{}  \bg(t_0)  \big{\}} = \mathbb{E} \Big{\{} V^*\big( \bo_i(t_0), \bo_j(t_0) \big) \Big{\}} = \notag \\
        & \sum_{\bo_j\in \Omega} \sum_{\bo_i\in \Omega} V^*\big( \bo_i(t_0), \bo_j(t_0) \big) p_{\bo_i, \bo_j}(\ro_i(t_0), \ro_j(t_0)),
    \end{align}
    where the second expectation is taken over the joint probability distribution $p_{\pi^*}(\ro_i(t_0), \ro_j(t_0))$ of $\bo_i$ and $\bo_j$ when following the action policy $\pi^*(\cdot)$. This equation can be extended for multi-agent case only by taking a summation over each agent's observation space on the left-hand side. Similarly, following the eq. (\ref{g(t) and value function - State Aggregation}), one can write the expected return of the system that is run by SAIC as:
\begin{align} \label{eq: expected return aggregated MDP}
        & \mathbb{E}_{p_{\pi^m,\pi^c}(\{\rtr (t)\}_{t=t_0}^{t=M})}\big{\{}  \bg(t_0)  \big{\}} = \mathbb{E} \Big{\{} V^m\big( \bo_i(t_0),  {\bc}_j^{(k)}(t_0) \big) \Big{\}} = \notag \\
        & \sum_{ k = 1}^{|\mathcal{C}|} \sum_{\bo_i\in \Omega} V^m\big( \bo_i(t_0),  {\bc}_j^{(k)}(t_0) \big) p_{\bo_i,  {\bc}_j}(\ro_i(t_0),  {\rc}_j^{(k)}(t_0)).
    \end{align}
    We can rewrite the joint probability $p_{\bo_i,  {\bc}_j}(\ro_i(t_0),  {\rc}_j^{(k)}(t_0))$ as 
    \begin{align}\label{eq: joint observation and comms prob}
             & p_{\bo_i,  {\bc}_j}(\ro_i(t_0),  {\rc}_j^{(k)}(t_0)) =  \sum_{\ro_j(t_0) \in \mathcal{P}_{i,k}} p_{\bo_i, \bo_j}(\ro_i(t_0), \ro_j(t_0)),
    \end{align}
    where the subset $\mathcal{P}_{i,k} \subset \Omega$ stands for the set of all observation realizations $\ro_j$ that are represented by $ {\rc}^{(k)}_j(t_0)$ according to the policy $\pi^{c,SAIC}_i(\cdot)$.
    Given eq. (\ref{eq: joint observation and comms prob}), one can express eq. (\ref{eq: expected return aggregated MDP}) - the expected return of the MAS under SAIC - also as
    \begin{align} \label{eq: expected return aggregated MDP-modified}
        & \mathbb{E}_{p_{\pi^m,\pi^c}(\{\rtr (t)\}_{t=t_0}^{t=M})}\big{\{}  \bg(t_0)  \big{\}} = \mathbb{E} \Big{\{} V^m\big( \bo_i(t_0),  {\bc}_j^{(k)}(t_0) \big) \Big{\}} = \notag \\
        & \sum_{ k = 1}^{|\mathcal{C}|} \sum_{\ro_j(t_0) \in \mathcal{P}_{i,k}} \sum_{\bo_i\in \Omega} \! V^m\big(\! \bo_i(t_0),  {\bc}_j^{(k)}(t_0) \!\big) p_{\bo_i, \bo_j}(\!\ro_i(t_0), \ro_j(t_0)\!).
    \end{align}
    In order for eq. (\ref{eq: expected return original MDP}) to have the arrangement of its summations similar to eq. (\ref{eq: expected return aggregated MDP-modified}), it is sufficient to break its left-hand summation to two parts
    \begin{align} \label{eq: expected return original MDP-modified}
        & \mathbb{E}_{p_{\pi^*}(\{\rtr (t)\}_{t=t_0}^{t=M})}\big{\{}  \bg(t_0)  \big{\}} = \mathbb{E} \Big{\{} V^*\big( \bo_i(t_0), \bo_j(t_0) \big) \Big{\}} = \notag \\
        & \sum_{ k = 1}^{|\mathcal{C}|} \sum_{\ro_j(t_0) \in \mathcal{P}_{i,k}} \sum_{\bo_i\in \Omega} V^*\big( \bo_i(t_0), \bo_j(t_0) \big) p_{\bo_i, \bo_j}(\ro_i(t_0), \ro_j(t_0)),
    \end{align}
    
    Further to equations (\ref{eq: expected return original MDP-modified})-(\ref{eq: expected return aggregated MDP-modified}), the difference between the achievable expected return of the centralized scheme and SAIC can be explained by 
    \begin{align}
        & \mathbb{E}_{p_{\pi^*}(\{\rtr (t)\}_{t=t_0}^{t=M})}\big{\{}  \bg(t_0)  \big{\}} - \mathbb{E}_{p_{\pi^m_i,\pi^c_i}(\{\rtr (t)\}_{t=t_0}^{t=M})}\big{\{}  \bg(t_0)  \big{\}} =  \notag \\
        & \sum_{ k = 1}^{|\mathcal{C}|} \sum_{\ro_j(t_0) \in \mathcal{P}_{i,k}} \sum_{\bo_i\in \Omega}\! V^*\big(\! \bo_i(t_0), \bo_j(t_0) \! \big) p_{\bo_i, \bo_j}(\!\ro_i(t_0), \ro_j(t_0)\!) \, - \notag \\
        & \sum_{ k = 1}^{|\mathcal{C}|} \sum_{\ro_j(t_0) \in \mathcal{P}_{i,k}} \sum_{\bo_i\in \Omega} \! V^m\big(\! \bo_i(t_0),  {\bc}_j^{(k)}(t_0) \!\big) p_{\bo_i, \bo_j}(\!\ro_i(t_0), \ro_j(t_0)\!).
    \end{align}
    
    We now proceed by factorizing the joint probability $p_{\bo_i, \bo_j}(\ro_i(t_0), \ro_j(t_0))$ which yields
    \small{
        \begin{align}
                  & \mathbb{E}_{p_{\pi^*}(\{\rtr (t)\}_{t=t_0}^{t=M})}\big{\{}  \bg(t_0)  \big{\}} - \mathbb{E}_{p_{\pi^m_i,\pi^c_i}(\{\rtr (t)\}_{t=t_0}^{t=M})}\big{\{}  \bg(t_0)  \big{\}} = \notag \\ 
                  & \sum_{ k = 1}^{|\mathcal{C}|} \sum_{\ro_j(t_0) \in \mathcal{P}_{i,k}} \sum_{\bo_i\in \Omega}\! p_{\bo_i, \bo_j}(\!\ro_i(t_0), \ro_j(t_0)\!) \big[ V^*\big(\! \bo_i(t_0), \bo_j(t_0) \! \big) \, \notag \\
                  & \;\;\;\;\;\;\;\;\;\; \;\;\;\;\;\;\;\;\;\;
                    \;\;\;\;\;\;\;\;\;\; \;\;\;\;\;\;\;\;\;\;
                    \;\;\;\;\;\;\;\;\;\; \;\;
                  - V^m\big(\! \bo_i(t_0),  {\bc}_j^{(k)}(t_0) \!\big) \big] 
        \end{align}
        }
    Since $\big{[}V^*\big( \bo_i(t_0), \bo_j(t_0) \big) -  V^m\big( \bo_i(t_0),  {\bc}_j(t_0) \big)\big{]} $ is upper-bounded by a constant term $\frac{2 \epsilon}{(1-\gamma)^2}$, its weighted sum is also upper bounded by the same term $\frac{2 \epsilon}{(1-\gamma)^2}$. Thus we conclude the proof of Theorem \ref{theorem: error bound}. \textcolor{Mycolor3}{We are unsure if the suggested bound is tight. The results obtained in the performance evaluation indicates a large difference between the bound offered above and the performance bound between SAIC and the optimal centralized control. }
\end{proof}

\end{document}